\definecolor{darkgreen}{rgb}{0.0, 0.5, 0.0}
\gdef\@copyrightpermission{
  \begin{minipage}{0.2\columnwidth}
   \href{https://creativecommons.org/licenses/by/4.0/}{\includegraphics[width=0.90\textwidth]{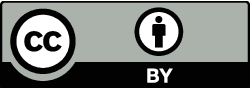}}
  \end{minipage}\hfill
  \begin{minipage}{0.8\columnwidth}
   \href{https://creativecommons.org/licenses/by/4.0/}{This work is licensed under a Creative Commons Attribution International 4.0 License.}
  \end{minipage}
  \vspace{5pt}
}
\title[AAMAS-2026 Formatting Instructions]{IMAS\texorpdfstring{$^2$}{2}: Joint Agent Selection and Information-Theoretic Coordinated Perception In Dec-POMDPs}
\author{Chongyang Shi}
\affiliation{
  \institution{University of Florida}
  \city{Gainesville}
  \country{United States}}
\email{c.shi@ufl.edu}
\author{Wesley A. Suttle}
\affiliation{
  \institution{Army Research Laboratory}
  \city{Adelphi}
  \country{United States}}
\email{wesley.a.suttle.ctr@army.mil}
\author{Michael Dorothy}
\affiliation{
  \institution{Army Research Laboratory}
  \city{Adelphi}
  \country{United States}}
\email{michael.r.dorothy.civ@army.mil}
\author{Jie Fu}
\affiliation{
  \institution{University of Florida}
  \city{Gainesville}
  \country{United States}}
\email{fujie@ufl.edu}
\begin{abstract}
We study the problem of jointly selecting sensing agents and synthesizing decentralized active perception policies for the chosen subset of agents within a Decentralized Partially Observable Markov Decision Process (Dec-POMDP) framework. 
Our approach employs a two-layer optimization structure. In the inner layer, we introduce information-theoretic metrics, defined by the mutual information between the unknown trajectories or some hidden property in the environment and the collective partial observations in the multi-agent system, as a unified objective for active perception problems. We employ various optimization methods to obtain optimal sensor policies that maximize mutual information for distinct active perception tasks. In the outer layer, we prove that under certain conditions,  the information-theoretic objectives are monotone and submodular with respect to the subset of observations collected from multiple agents. We then exploit this property to design an IMAS$^2$ (Information-theoretic Multi-Agent Selection and Sensing) algorithm for joint sensing agent selection and sensing policy synthesis. However, since the policy search space is infinite, we adapt the classical Nemhauser-Wolsey argument to prove that the proposed IMAS$^2$ algorithm can provide a tight $(1 - 1/e)$-guarantee on the performance.
Finally, we demonstrate the effectiveness of our approach in a multi-agent cooperative perception in a grid-world environment.
\end{abstract}
\keywords{Decentralized Partially Observable Markov Decision Process (Dec-POMDP); Agent Selection; Active Perception; Submodular Optimization.}
\newcommand{\BibTeX}{\rm B\kern-.05em{\sc i\kern-.025em b}\kern-.08em\TeX}
  \newcommand{\nat}{\mathbf{N}}
    \newcommand{\eg}{e.g.}
    \renewcommand{\eg}{e.g.}
    \newcommand{\ie}{i.e.~}
    \renewcommand{\ie}{i.e.~}
    \newcommand{\etal}{et~al.}
    \renewcommand{\etal}{et~al.}
\newcommand{\dist}{\mathcal{D}}
\newcommand{\reals}{\mathbb{R}}
\newtheorem{theorem}{Theorem} 
\newtheorem{lemma}{Lemma} 
\newtheorem{proposition}{Proposition}
\newtheorem{corollary}{Corollary} 
\newtheorem{problem}{Problem}
\newtheorem{remark}{Remark}
\newtheorem{assumption}{Assumption}
\acrodef{mdp}[MDP]{Markov decision process} 
\acrodef{lmdp}[LMDP]{labeled Markov decision process} 
\acrodef{dec_pomdp}[Dec-POMDP]{decentralized partially observable Markov decision process} 
\acrodef{asw}[ASW]{Almost-Sure Winning}
\acrodef{ltlf}[LTL$_f$]{Linear Temporal Logic over Finite Traces}
\acrodef{ltl}[LTL]{linear temporal logic}
\acrodef{scltl}[co-safe LTL]{syntactically co-safe Linear Temporal Logic}
\acrodef{dfa}[DFA]{deterministic finite automaton}
\acrodef{des}[DES]{discrete event system}
 \DeclareMathOperator*{\optmax}{\mathrm{maximize}}
 \newcommand{\argmax}{\mathop{\mathrm{argmax}}}
\newcommand{\expect}{\mathbb{E}}
\newcommand{\calM}{\mathcal{M}}
\newcommand{\calS}{\mathcal{S}}
\newcommand{\calA}{\mathcal{A}}
\newcommand{\calO}{\mathcal{O}}
\acrodef{hmm}[HMM]{hidden Markov model}
\acrodef{fsc}[FSC]{finite state controller}
\acrodef{pomdp}[POMDP]{partially observable Markov decision process} 
\newcommand{\calK}{{\mathcal{K}}}
\begin{document}


\pagestyle{fancy}
\fancyhead{}


\maketitle 


\section{Introduction}
Autonomous multi-agent systems are increasingly deployed in environments where agents must actively gather information under uncertainty. Applications include teams of robots for surveillance and search-and-rescue \cite{Rybski2000surveillance, Kantor2006, Rosenfeld2015Intelligent}, sensor networks for target tracking \cite{Tian2006Target, Jing2020Pose},  and cooperative perception in autonomous driving and aerial monitoring \cite{Xu2020hitmac, wang2023corecooperativereconstructionmultiagent}. 
In such settings, an active perception agent must decide not only how to act, but also what to sense, given the past observation. In multi-agent cooperative active perception, it must additionally account for its knowledge of other agents’ possible observations to avoid redundancy and achieve better coordination. Meanwhile, in addition to perception, a multi-agent system may be required to perform tasks leveraging the information from the perception agents, for example, in a multi-UAV target tracking problem, a subset of UAVs may be tasked to ensure accurate tracking of the moving target, while other UAVs may be tasked to interdict the target given the target trajectory.

Motivated by these\ applications, this paper studies the following research problem: ``Given a multi-agent system with heterogeneous dynamics and perception capabilities, how to select a subset of agents and design their decentralized  perception strategies for a perception objective?'' Specifically, we model a multi-agent system and its interaction with the dynamic, stochastic environment using a \ac{dec_pomdp} framework. We aim to select a subset of the agents for perception tasks,  whose objectives are defined by maximizing the mutual information between some unknown quantity (trajectory, states, or critical events) and the collective observations of the perception team.

\noindent\textbf{Related work.} 
This problem is closely related to the sensor placement problem, which aims to find an optimal subset of agents to participate in a perception task under limited sensing resources. Unlike traditional sensor coverage problems that emphasize spatial reach or visibility, the goal is to maximize the informativeness of the selected agents’ observations about the environment or latent variables of interest. Selecting too many agents wastes resources and increases redundancy, while selecting too few may lead to poor estimation or inference accuracy. To balance these trade-offs, many existing approaches formulate the objective using information-theoretic criteria, such as mutual information or entropy reduction \cite{Krause2007Near, Singh2009informative}. When the agents’ policies are finite, the environment is deterministic, and the inference objective is stationary—as in multi-robot path planning for predicting algae content in a lake—the resulting objectives often exhibit submodularity. This property enables the design of greedy selection algorithms with provable near-optimal performance guarantees and strong scalability for large multi-agent systems. However, in stochastic multi-agent systems, where the policy space is infinite (often parameterized by deep neural networks) and the unknown quantities to be inferred can be a stochastic process (such as tracking a moving agent), existing results do not directly apply.


The problem of decentralized active sensing and perception has been studied extensively in multi-agent systems. Early work, such as \cite{Kreucher2005Sensor}, formulated sensor management as a centralized active sensing problem, where a single decision maker plans sensing actions to maximize expected information gain using Bayesian filtering. While effective for small-scale systems, such centralized formulations are computationally expensive and scale poorly with the number of agents.  
Subsequent research has focused on decentralized and scalable planning methods.
Satsangi \etal~\cite{Satsangi_Whiteson_Oliehoek_2015, Satsangi2020Active} studied a class of partially observable Markov decision processes (POMDPs) whose value functions exhibit submodularity, and leveraged this property to design efficient algorithms for dynamic sensor selection with near-optimal guarantees. In contrast, Kumar \etal ~\cite{Kumar2017Decentralized} considered multi-agent planning under stochastic dynamics, where submodularity appears in the reward structure rather than the value function. Their framework provides a theoretical foundation for decentralized decision making with submodular rewards, enabling agents to coordinate without centralized control.
Best \etal ~\cite{Best2019MCTS} proposed Dec-MCTS, a decentralized Monte Carlo tree search method, for multi-robot active perception, which can handle stochasticity, but does not exploit the submodular structure of reward functions for efficiency or near-optimality guarantees. 
Lauri and Oliehoek~\cite{Lauri2020prediction} introduced a prediction-reward framework that quantifies the uncertainty in a centralized state estimate obtained after all agents complete sensing. To enable decentralized computation, they approximate this centralized reward using the expected accuracy of each agent’s prediction of the final belief. This reformulation expresses the global information objective as a standard \ac{dec_pomdp} reward function dependent only on local states, actions, and prediction actions, allowing the use of existing \ac{dec_pomdp} solvers while preserving the informativeness objective. However, this method also does not employ submodularity for more efficient computation.

Recent research also studied the multi-robot multi-target tracking problem in deterministic, nonlinear systems \cite{corahScalableDistributedPlanning2021}. They employ the receding horizon planning framework and consider at each planning epoch, the team of robots aims to optimize the mutual information between the target states and the observations generated by a set of feasible trajectories for each robot. Because they consider a finite set of trajectories for each robot and deterministic dynamics, such an objective function can be shown to be submodular and monotone. By leveraging the property of a submodular function, they develop an algorithm to efficiently compute the near-optimal trajectories for multiple robots.

\noindent\textbf{Our contributions.} 
Existing work on sensor selection or motion planning of active sensing agents considers finitely many sensors \cite{Singh2009informative} or discrete trajectories \cite{corahScalableDistributedPlanning2021}. In our setting, though the set of agents is finite, each agent is to compute an observation-based stochastic policy, and therefore, the policy space for agents is infinite. Therefore, existing GreedyMax algorithms for submodular \cite{nemhauser1978analysis} cannot be directly applied to compute jointly the subset of agents and their decentralized perception policies. Our problem can not be directly mapped to decentralized planning for multi-agent systems given submodular rewards because the information objectives cannot be decomposed into submodular reward given the joint state and joint actions (see Section~\ref{sec:submodularity}). Existing \ac{dec_pomdp} for active perception \cite{Best2019MCTS,Lauri2020prediction} does not solve the agent selection problem jointly with the policy synthesis.

First, we prove that in a \ac{dec_pomdp}, when the set of policies are fixed, then the mutual information between the joint state trajectory given a subset of agents' policies, and consequently their partial observations, is monotone and submodular, provided that the agents' observations are conditionally independent given the joint trajectory. Building on this result, we then investigate two other inference objectives: 
\begin{inparaenum}
\item Inferring the trajectory of an environment agent;
\item and inferring a secret property which is a function of the trajectory of an environment agent.
\end{inparaenum}
We show that in both cases, under certain assumption, the mutual information between the quantity to be inferred and a subset of agents' observations are submodular and monotone.

However, these results only enable us to use approximate algorithms for submodular maximization to solve the agent selection problem, assuming each selected agent follows a pre-defined decentralized policy. To achieve joint agent selection and policy synthesis,  we propose the IMAS$^2$ (Information-theoretic Multi-Agent Selection and Sensing) algorithm, a variant of the GreedyMax algorithm \cite{nemhauser1978analysis} to select one agent and determine its perception policy at a time based on the principle of maximizing the marginal gain. We then prove that, under additional constraints on subsequent maximal marginal gains,  this algorithm can provide a strong $(1-1/e)$-guarantee on the performance. The proposed algorithm achieves scalability and near-optimal performance, bridging rigorous submodular optimization theory with practical decentralized planning in multi-agent active perception.  
Finally, we formulate the agent selection and policy optimization problem as a two-layer optimization: an inner layer that computes an optimal perception policy, with respect to maximizing information gain of a chosen agent, in addition to observations of selected agents. And an outer layer efficiently selects $k$ agents to achieve the approximately optimal cooperative active perception.
The results are then experimentally validated.

\noindent\textbf{Assumptions and Scope.}
We focus on Dec-POMDPs with observation independent sensing models, formalized by Assumption~\ref{assumption:independent-obs}. Importantly, Subsection~\ref{subsec:latent-state-seq} (Inferring Latent State Sequence) only requires this observation conditional independence and does not require transition independence.
In contrast, Subsection~\ref{subsec:env-state-seq} (Inferring Environment State Sequence) and Section~\ref{sec:env-secret} (Environment Secret Estimation) adopt the stronger transition- and observation independence conditions in Assumption~\ref{assumption:independent}. We highlight these assumptions explicitly to distinguish the settings covered by each result.

\section{Preliminary}
\noindent\textbf{Notations.}
The set of real numbers is denoted by $\reals$. Random variables will be denoted by capital letters, and their realizations by lowercase letters (\eg, $X$ and $x$).  A sequence of random variables and their realizations with length $T$ are denoted as $X_{0:T}$ and $x_{0:T}$. The notation $x_i$ refers to the $i$-th component of a vector $x \in \reals^{n}$ or to the $i$-th element of a sequence $x_0, x_1, \ldots$, which will be clarified by the context. 
Given a finite set $\mathcal{S}$, let $\dist(\mathcal{S})$ be the set of all probability distributions over $\mathcal{S}$. The set $\mathcal{S}^{T}$ denotes the set of sequences with length $T$ composed of elements from $\mathcal{S}$, and $\mathcal{S}^\ast$ denotes the set of all finite sequences generated from $\mathcal{S}$. The empty string in $\mathcal{S}^\ast$ is denoted by $\varnothing$. The notation
$X_1 \perp\!\!\!\perp X_2 |Y$
means random variables $X_1$ and $X_2$ are conditionally independent given random variable $Y$.
 
We model the  interaction between a multi-agent system and its environment using a finite-horizon, decentralized, partially observable Markov decision process (Dec-POMDP) without reward:
\[
\mathcal{M} = \langle  T, \mathcal{N}, \mathcal{S},  A, \mathcal{O}, P, \mu, \{E_i\} \rangle,
\]
where:
\begin{itemize}
\item $T \in \nat$   is the time horizon of the problem;
     \item $ \mathcal{N} = \{1, 2, \ldots, N\}$ is a set of $N$ agents,
    \item $\mathcal{S}$ is the finite set of states. 
     \item $\calA = \prod_{i\in \mathcal{N}}\calA_i$ is the joint action space, where $\calA_{i}$ is the finite action space of agent $i \in \mathcal{N}$. The tuple $a  = \langle a_{1}, a_{2 }, \ldots, a_{N} \rangle$  is called the joint action.
    \item $\calO= \prod_{i\in \mathcal{N}} \calO_i$ is the joint    observation space, where $\calO_{i}$ is the finite observation space of agent $i$. The tuple $o = \langle o_{1}, o_{2}, \ldots, o_{N} \rangle$ of individual observations is called the joint  observation;
    \item $P: \mathcal{S}\times \calA \rightarrow\dist(\calS) $ is the probabilistic transition function;
    \item $\mu$ is the initial state distribution.
    \item $E_i: \calS \rightarrow \dist(\calO_i)$ is the emission/observation function for agent $i$.
\end{itemize}

An admissible solution of a \ac{dec_pomdp} is a decentralized joint policy $\bm{\pi}$, i.e., a tuple $\langle \pi_1, \ldots, \pi_N \rangle$, where the individual policy $\pi_i: (\calO_i\times \calA_i)^\ast \calO_i \rightarrow \dist(\calA_i)$ of each agent $i$ maps individual \textbf{observation-action} sequences $y_{i,0:t} = ( o_{i,0}, a_{i,0},  o_{i,1}, \ldots,    o_{i,t})$ to an individual action distribution. It is noted that the individual agent knows its own action but not others' actions.

We introduce the notion of submodular functions next. Denote $\Omega$ as a ground set of $n$ data points, $\Omega = \{x_1, x_2, x_3, \ldots, x_n\}$, and a set function $f : 2^\Omega \rightarrow \mathbb{R}$. We say that $f$ is \textit{normalized} if $f(\emptyset) = 0$, and $f$ is \textit{subadditive} if
\[
f(U) + f(V) \geq f(U \cup V),
\]
holds for all $U, V \subset \Omega$. Define the first-order partial derivative (equivalently, the gain) of an element $j \notin U$ in the context $U$ as
\[
f(j \mid U) = f(U \cup \{j\}) - f(U).
\]
The function $f$ is \textit{submodular}~\cite{fujishige2005submodular} if for all $U, V \subset \Omega$, it holds that
\[
f(U) + f(V) \geq f(U \cup V) + f(U \cap V).
\]
An equivalent characterization of submodularity is the \textit{diminishing marginal returns} property, namely
\[
f(j \mid U) \geq f(j \mid V) \quad \text{for all } U \subset V, \, j \notin V.
\]

Next, we introduce the definition of entropy and mutual information. Entropy is commonly employed to quantify the uncertainty about a random variable \cite{khouzani2017leakage}.  The conditional entropy of a  random variable $X_2$ given another random variable $X_1$ is defined by
\[
H(X_2|X_1) =   -\sum_{x_1\in \mathcal{X}}\sum_{x_2\in \mathcal{X}} p(x_1,x_2) \log p(x_2|x_1).
\]
Conditional entropy quantifies the uncertainty of \( X_2 \) given \( X_1 \), and lower values indicate that \( X_2 \) is easier to infer from knowing the value of \( X_1 \).
The mutual information between two random variables \( X_1 \) and \( X_2 \) is defined by
\[
I(X_1; X_2) = \sum_{x_1 \in \mathcal{X}} \sum_{x_2 \in \mathcal{X}} p(x_1, x_2) \log \frac{p(x_1, x_2)}{p(x_1)p(x_2)}.
\]
Mutual information quantifies the amount of information that one random variable contains about another. Higher values indicate a stronger statistical dependency between \( X_1 \) and \( X_2 \), while \( I(X_1; X_2) = 0 \) if and only if \( X_1 \) and \( X_2 \) are independent.

\section{Leveraging Submodularity for Multi-Agent Active Perception}
\label{sec:submodularity}

This section examines which classes of inference objectives in a multi-agent active perception planning exhibit submodularity with respect to the set of agents' observations.

\subsection{Inferring Latent State Sequence}
\label{subsec:latent-state-seq}


In this subsection, we study a class of active perception objectives for minimizing the uncertainty in the estimation of the joint state trajectories. In particular, we are interested in finding an (approximately)-optimal subset $\mathcal{K}$ of the agents to carry out the active perception tasks and computing their individual policies, for some $\mathcal{K} \subset \mathcal{N}$. The agents in $\mathcal{N}\setminus \mathcal{K}$ are excluded from the perception task and their observations do not contribute to the perception objective. 
 Importantly, selecting a subset $\mathcal{K}$ does \emph{not} remove the other agents from the system dynamics: all agents in $\mathcal{N}$ still execute their respective policies and evolve under the original joint transition model. The only change is that the inference objective is evaluated using the collective observations of agents in $\mathcal{K}$ (e.g., for $\mathcal{N}=\{1,\ldots,5\}$ and $\mathcal{K}=\{1,2,4\}$, all five agents act and affect the state evolution, while only $Y_1,Y_2,Y_4$ are used for estimating $S_{0:T}$).

\begin{problem}
\label{prob:opt_prob_states}
    Given the \ac{dec_pomdp} $\mathcal{M}$, determine a subset $\mathcal{K}$  of $k$ agents and design the joint policy $\{\pi_i, i  \in \mathcal{K}\}$ that maximizes the mutual information between  the latent state sequence $X\coloneqq S_{0:T}$ and the observations:
    \begin{align}
    \label{eq:opt-traj}
\optmax_{\mathcal{K}\subset \mathcal{N}, |\mathcal{K}|=k, 
\bm{\pi_K}= \{\pi_k, k \in \mathcal{K}\}} I(X; \bm{Y}_\mathcal{K}, M_{\bm{\pi_K}})
    \end{align}
    where $\bm{\pi_K}$ is the joint policy for the selected subset of agents, and $M_{\bm{\pi_K}}(\mathcal{K})$ is the induced stochastic process given the joint policy $\bm{\pi_K}$ for the selected agents  in $\mathcal{K}$ and arbitrary policies for the non-selected agents in $\mathcal{N}\setminus \mathcal{K}$; $\bm{Y}_\mathcal{K}$ is the collective observations of the selected   $k$ agents.
\end{problem}

The following assumption is made to establish the submodularity in the perception objective with respect to agents' observations.
\begin{assumption}
 \label{assumption:independent-obs}
 For any $s\in S$, for any $i,j \in \mathcal{N}$ where $i\ne j$,   agent $i$'s observation  $E_i(s)$ is conditionally independent from agent $j$'s observation $E_j(s)$ given state $s$. \end{assumption}

In the following, we fix  
a joint policy $\bm{\pi}$, consisting of individual agents'   observation-based policies. Let  the induced stochastic process be $M_{\bm{\pi}} = \{S_t, \{A_{i,t}, i\in \mathcal{N} \}, \{O_{i,t}, i\in \mathcal{N} \}, t \in \nat\}$.   We denote by $Y_i \coloneqq O_{i,0:T}$ the observation sequence received by agent $i$.  The choice of this joint policy $\bm{\pi}$ is arbitrary, i.e., it need not be the optimal policy for a given perception task.

\begin{lemma} 
\label{lma:cond-independent}
Given any $Y_{i } =O_{i,0:T}, Y_{j } = O_{j, 0:T}$  representing agents $i$ and $j$'s observation sequences for a finite horizon $T$ and $   X = S_{0:T} $ be the latent state  sequence.  Under Assumption~\ref{assumption:independent-obs},  $Y_{i }$ and $Y_{j }$ are conditional independent given $X$.
\end{lemma}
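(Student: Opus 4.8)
The plan is to reduce the statement to a pointwise factorization of the conditional likelihood: it suffices to show that, in the induced process $M_{\bm{\pi}}$, the identity $P(y_i, y_j \mid x) = P(y_i \mid x)\,P(y_j \mid x)$ holds for every realization $x = s_{0:T}$, $y_i = o_{i,0:T}$, $y_j = o_{j,0:T}$. I would obtain this in two stages. First, factor the joint observation likelihood over time, reducing it to a product of per-step emission terms $P(O_{i,t}, O_{j,t} \mid S_t)$. Second, split each per-step term across the two agents using Assumption~\ref{assumption:independent-obs}.

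The steps, in order, are as follows. (i) Starting from the generative description of $M_{\bm{\pi}}$, express the conditional density of the two observation sequences given the realized state path. (ii) Argue that, conditioned on the entire state sequence $S_{0:T} = s_{0:T}$, the observation of each agent at time $t$ is produced solely by the emission kernel applied to $S_t$, so the observations are conditionally independent across time and
\[
P(y_i, y_j \mid x) = \prod_{t=0}^{T} P(o_{i,t}, o_{j,t} \mid s_t).
\]
(iii) Apply Assumption~\ref{assumption:independent-obs} at each $t$ to get $P(o_{i,t}, o_{j,t} \mid s_t) = E_i(s_t)(o_{i,t})\, E_j(s_t)(o_{j,t})$. (iv) Regroup the product over $t$ into the two agent-specific factors $\prod_t E_i(s_t)(o_{i,t}) = P(y_i \mid x)$ and $\prod_t E_j(s_t)(o_{j,t}) = P(y_j \mid x)$, yielding the claim. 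Since the base case $T=0$ is exactly Assumption~\ref{assumption:independent-obs}, a clean route is induction on the horizon, carrying the factorization forward one emission step at a time.

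The main obstacle is step~(ii): justifying that conditioning on the \emph{full} state trajectory does not inject any dependence between the two observation sequences beyond what the emission kernels supply. The delicate point is that each agent's action is a function of its own past observations and jointly influences the next state through $P(\cdot \mid s_t, a_t)$; because we condition on that next state, a naive argument would leave active an ``explaining-away'' path $O_{i,t} \to A_{i,t} \to S_{t+1} \leftarrow A_{j,t} \leftarrow O_{j,t}$. I would close this by making explicit that the emission draw $O_{i,t} \sim E_i(S_t)$ is exogenous noise depending only on the current state and independent of the action-selection and transition mechanism, so that the observation likelihood, read as a function of the realized states, is precisely the product of per-step emission kernels asserted in step~(ii). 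This is the fact that must be isolated and invoked; once it is in place, steps~(iii)--(iv) are routine, and the role of Assumption~\ref{assumption:independent-obs} is only to decouple the two agents \emph{within} a single time step.
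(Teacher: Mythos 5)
Your step~(ii) is not merely delicate---it is false in the generality in which Lemma~\ref{lma:cond-independent} is stated, and the ``exogenous noise'' argument you propose cannot repair it. The collider path you identify is genuinely active. Concretely: take two agents, states $\mathcal{S}=\{0,1\}$, horizon $T=1$, $S_0$ uniform; each agent observes $O_{i,0}=S_0$ with probability $0.9$ and $1-S_0$ otherwise, independently across agents (so Assumption~\ref{assumption:independent-obs} holds); each agent plays the observation-based policy $A_{i,0}=O_{i,0}$; and the joint transition sets $S_1=1$ exactly when $A_{1,0}=A_{2,0}$, and $S_1=0$ otherwise. Conditioned on $X=(S_0,S_1)=(0,1)$ we know $A_{1,0}=A_{2,0}$, hence $O_{1,0}=O_{2,0}$ with probability one: the time-$0$ observations coincide almost surely given $X$, so $Y_1$ and $Y_2$ are maximally dependent given $X$, not conditionally independent. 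In the same example your claimed factorization $P(y_i,y_j\mid x)=\prod_{t} P(o_{i,t},o_{j,t}\mid s_t)$ fails: the right-hand side is a product measure while the left-hand side is supported on the diagonal. Exogeneity does not save you because, although the emission noise is exogenous given $S_t$ at generation time, the actions are functions of that noise and feed into the next state; conditioning on the realized future states therefore reweights the law of the observations by the transition factors $P(s_{t+1}\mid s_t,a_t)$, which depend on the \emph{joint} action and hence couple the two agents' observations. Bayes' rule does not let these factors cancel.

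To be clear, this is not a gap you introduced: the paper's own (appendix) proof makes exactly the same leap, asserting that per-step emission independence yields sequence-level conditional independence given $X$, and that $A_{i,t}$ ``is independent from $O_{j,0:t-1}$,'' with no treatment of the collider created by conditioning on states that depend on the joint action. Your write-up is in fact more careful, since it names the obstacle before attempting to dismiss it. What the argument actually needs is an additional hypothesis that deactivates the collider: the transition kernel does not depend on actions, or the policies are open-loop, or the dynamics decouple as in Assumption~\ref{assumption:independent} (joint transition $P_e(s_e'\mid s_e)\prod_i P_i(s_i'\mid s_i,a_i)$), in which case every collider involves only a single agent's own variables and cannot correlate agents $i$ and $j$. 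This is precisely why the induction proof of the analogous Lemma~\ref{lma:cond-independent-special}---where the conditioning process $X_e$ is exogenous to all agents---does go through, while Lemma~\ref{lma:cond-independent} under coupled, action-dependent dynamics does not hold as stated.
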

\begin{proof}  
See appendix\footnote{You can find the appendix in \url{https://github.com/AronYoung414/multi-agent-active-perception-grid-world.}}.
\end{proof}

 \begin{lemma}
\label{lma:independent}
Let $A \subset \mathcal{N}$ and  $Y_A = \{Y_i : i \in A\}$. 
  For any $Y_j $ where $j \notin A$, 
    $H(Y_j|Y_A, X) = H(Y_j|X)$.
\end{lemma}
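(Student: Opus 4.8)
The plan is to recast the claimed entropy identity as a conditional-independence statement and then establish that independence directly from the emission structure. Since $H(Y_j \mid Y_A, X) = H(Y_j \mid X)$ is equivalent to $I(Y_j; Y_A \mid X) = 0$, which in turn is equivalent to $Y_j \perp\!\!\!\perp Y_A \mid X$, it suffices to show that, conditioned on the full latent state sequence $X = S_{0:T}$, the observation sequence $Y_j$ of the single agent $j$ is independent of the whole collection $Y_A = \{Y_i : i \in A\}$.

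First I would prove the stronger statement that, given $X$, the entire family of observation sequences factorizes across agents, i.e. $p(\{y_i\}_{i \in \mathcal{N}} \mid x) = \prod_{i \in \mathcal{N}} p(y_i \mid x)$. The two ingredients are: (i) each emission $O_{i,t}$ depends only on the contemporaneous state $S_t$ through $E_i$, so that once the whole trajectory $s_{0:T}$ is fixed the observation stream of agent $i$ is driven solely by its own emission noise, giving $p(y_i \mid x) = \prod_{t=0}^{T} E_i(o_{i,t}\mid s_t)$; and (ii) Assumption~\ref{assumption:independent-obs}, which makes the per-timestep emissions conditionally independent across agents given $S_t$, namely $p(\{o_{i,t}\}_{i} \mid s_t) = \prod_{i} E_i(o_{i,t}\mid s_t)$. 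This is exactly the machinery used for the pairwise case in Lemma~\ref{lma:cond-independent}, now applied simultaneously to all agents over the full horizon; multiplying these factorizations over $t$ yields the product form over $\mathcal{N}$. Given this full factorization, the result is immediate: partitioning the agents into the block $A$ and the singleton $\{j\}$ and marginalizing the remaining coordinates gives $p(y_A, y_j \mid x) = p(y_A \mid x)\, p(y_j \mid x)$, so $Y_j \perp\!\!\!\perp Y_A \mid X$ and hence $H(Y_j \mid Y_A, X) = H(Y_j \mid X)$. I would flag explicitly that invoking Lemma~\ref{lma:cond-independent} only in its pairwise form is not sufficient, since pairwise conditional independence does not imply mutual conditional independence; the argument must pass through the joint factorization of the whole observation collection.

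The step I expect to be the main obstacle is justifying ingredient (i) rigorously — that conditioning on the \emph{entire} latent trajectory $X$ leaves each $Y_i$ as independent emission noise. The delicacy is that the agents are coupled both through the shared transition kernel $P(s_{t+1}\mid s_t, a_t)$ and through their closed-loop policies $\pi_i(\cdot \mid y_{i,0:t})$, so in principle fixing the state path might not eliminate all residual statistical dependence among the observation streams. I would handle this by writing the joint law of $(X, \{Y_i\}, \{A_{i,t}\})$ as a product of $\mu$, the emission kernels, the policy kernels, and the transition kernels, and then isolating the emission factors as the only terms carrying the observations: since those separate cleanly across agents and across time once $s_{0:T}$ is held fixed, the conditional law of the observations given $X$ inherits the product structure. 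Making this argument airtight — i.e. showing that the policy/transition factors do not reintroduce cross-agent dependence among observations after conditioning on the full state sequence — is where the care is required, and it is the crux on which the submodularity development in Problem~\ref{prob:opt_prob_states} ultimately rests.
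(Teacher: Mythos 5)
Your reduction of the identity $H(Y_j\mid Y_A,X)=H(Y_j\mid X)$ to the conditional-independence statement $Y_j \perp\!\!\!\perp Y_A \mid X$ is correct, and so is your observation that the pairwise statement of Lemma~\ref{lma:cond-independent} does not by itself deliver this. That second point is in fact a genuine criticism of the paper: its entire proof of this lemma is the one-line invocation of ``the conditional independence between $Y_j$ and any $Y_i \in Y_A$ given $X$,'' i.e., exactly the pairwise-to-joint leap you flag as invalid (pairwise conditional independence does not imply independence of $Y_j$ from the collection $Y_A$). Your proposed repair---prove the full factorization $p(\{y_i\}_{i\in\mathcal{N}}\mid x)=\prod_{i}p(y_i\mid x)$ and then marginalize---is the natural way one would try to close that gap, and it is a more careful route than the paper takes.

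However, the factorization you assert at the crux is false under the hypotheses actually in force here (Assumption~\ref{assumption:independent-obs} only, coupled transitions $P(s_{t+1}\mid s_t,a_t)$, closed-loop policies). After conditioning on $s_{0:T}$, the emission and policy factors do separate across agents, but the transition factors $\prod_t P\bigl(s_{t+1}\mid s_t,(a_{1,t},\dots,a_{N,t})\bigr)$ remain, and they couple all agents' actions---hence, through the policies, all agents' observations. Conditioning on the state path is conditioning on a collider, and it is precisely the policy/transition factors, not the emissions, that reintroduce cross-agent dependence. Concretely: take two agents starting in a known state $s_0$, each observing an independent fair bit $O_{i,0}$ (so Assumption~\ref{assumption:independent-obs} holds), playing $a_{i,0}=O_{i,0}$, with the next state equal to $s_A$ if $a_{1,0}=a_{2,0}$ and $s_B$ otherwise, and null observations thereafter. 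Given $X=(s_0,s_A)$ the two observation bits are perfectly correlated, so $H(Y_2\mid Y_1,X)=0$ while $H(Y_2\mid X)=1$: the lemma itself fails, not merely your proof of it. So the step you singled out as ``where the care is required'' is not a technical nuisance but an actual obstruction; no bookkeeping of factors makes the joint-action coupling disappear. Both this lemma and Lemma~\ref{lma:cond-independent} need an additional hypothesis---e.g., action-independent state dynamics, open-loop policies, or the per-agent decoupled dynamics of Assumption~\ref{assumption:independent}, under which the appendix's induction for Lemma~\ref{lma:cond-independent-special} does go through. Your proposal is more honest than the paper's proof about where the difficulty lies, but as written it does not overcome it, because it cannot.
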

\begin{proof}
    Due to the conditional independence between $Y_j$ and any $Y_i  \in Y_A$ given $X$.
\end{proof}

\begin{lemma}
\label{lma:submodular}
 Let $A\subset \mathcal{N}$ and  $Y_A = \{Y_i : i \in A\}$. 
Let
\[
g(Y_A) \coloneqq I(X; Y_A),
\]
where   $I(\cdot;\cdot)$ denotes mutual information between the random state sequence $X$ and the collection of observations $Y_A$. 
The  function $g(\cdot)  $ is monotone and submodular.
\end{lemma}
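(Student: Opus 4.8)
The plan is to view the ground set as the set of agents $\mathcal{N}$ and to verify both properties through the marginal gain of adding a single observation. For $j \notin A$, the chain rule for mutual information gives
\[
g(Y_j \mid Y_A) = I(X; Y_A \cup \{Y_j\}) - I(X; Y_A) = I(X; Y_j \mid Y_A),
\]
so the entire argument reduces to controlling the conditional mutual information $I(X; Y_j \mid Y_A)$. Monotonicity is then immediate: conditional mutual information is always nonnegative, so $g(Y_j \mid Y_A) \geq 0$ for every $A$ and every $j \notin A$, which means $g$ is nondecreasing.

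For submodularity I would first rewrite the marginal gain in entropy form,
\[
I(X; Y_j \mid Y_A) = H(Y_j \mid Y_A) - H(Y_j \mid Y_A, X),
\]
and then invoke Lemma~\ref{lma:independent} to collapse the second term to $H(Y_j \mid X)$, which crucially does not depend on $A$. Given $A \subset B$ with $j \notin B$, the two marginal gains therefore differ only in their first terms,
\[
g(Y_j \mid Y_A) - g(Y_j \mid Y_B) = H(Y_j \mid Y_A) - H(Y_j \mid Y_B),
\]
and this quantity is nonnegative because $Y_A \subseteq Y_B$ and conditioning on additional variables never increases entropy. This establishes the diminishing-marginal-returns characterization of submodularity given in the preliminaries.

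The crux of the argument, and its only nontrivial ingredient, is the cancellation step: that $H(Y_j \mid Y_A, X) = H(Y_j \mid X)$ holds uniformly in $A$. This is precisely where Assumption~\ref{assumption:independent-obs} enters, through the pairwise conditional independence of observations given the state established in Lemma~\ref{lma:cond-independent} and promoted to independence from an entire observation set in Lemma~\ref{lma:independent}. Without this conditional independence, $H(Y_j \mid Y_A, X)$ would itself vary with $A$, the two $H(Y_j \mid X)$ terms would not cancel, and submodularity could fail. I expect this to be the main obstacle to present cleanly, since it is the step that genuinely uses the structure of the Dec-POMDP emission model rather than generic properties of mutual information. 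The remaining verifications---nonnegativity of conditional mutual information and monotonicity of entropy under conditioning---are standard and can be stated without detailed computation.
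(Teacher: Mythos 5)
Your proposal is correct and follows essentially the same route as the paper's proof: both establish monotonicity via nonnegativity of conditional mutual information, and both prove submodularity by reducing the marginal gain to $I(X; Y_j \mid Y_A)$ via the chain rule, invoking Lemma~\ref{lma:independent} to replace $H(Y_j \mid Y_A, X)$ by $H(Y_j \mid X)$, and then applying the fact that conditioning cannot increase entropy. The only difference is presentational---you expand the conditional mutual information and cancel, while the paper runs the same chain of inequalities in the forward direction starting from $H(Y_j \mid Y_A) \ge H(Y_j \mid Y_B)$---so the two arguments are mathematically identical.
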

\begin{proof}
See appendix.
\end{proof}


Under the assumption of independent observations but possibly coupled dynamics, we have proven that the mutual information between the latent state sequence $X$ and a subset  $Y_A$ of observations is monotone submodular in the subset $Y_A$. 

\begin{remark}
The above analysis is based on a pre-defined decentralized policy. It can be used for selecting a subset of agents whose observations are most informative for estimating the joint state trajectory. Leveraging the monotone, submodular property, the original GreedyMax algorithm \cite{nemhauser1978analysis} can find such a subset of agents with $(1-1/e)$ performance guarantee on the suboptimality.
\end{remark}

 \subsection{Inferring Environment State Sequence}
\label{subsec:env-state-seq}
Although mutual information is often used to quantify information gain, directly maximizing the mutual information between the joint state trajectory and a subset of perception agents’ observations may not lead to the intended inference objective. Since \( I(X; Y) = H(X) - H(X|Y) \), maximizing mutual information encourages reducing the uncertainty of the trajectory given observations, measured by \( H(X|Y) \). However, it also rewards increasing the prior entropy \( H(X) \). As a result, this objective can favor policies that induce joint state trajectories with a high initial uncertainty, rather than those that genuinely improve inference accuracy.

Therefore, we consider another objective function that is well-suited for environmental monitoring tasks. 
We focus on a case where a team of agents is deployed to actively monitor a dynamic environment state trajectory. In the \ac{dec_pomdp}, 
 each joint state $s = \langle s_e, s_1, \ldots, s_N\rangle $ consists of an environment agent's state $s_e$ and agent $i$'s state $s_i$ for each $i \in \mathcal{N}$ of the multi-agent system.

We modify the objective function in \eqref{eq:opt-traj} for the environment-state trajectory estimation as follows:
\begin{align}
\label{eq:opt-traj-env}
    \optmax_{\mathcal{K}\subset \mathcal{N}, |\mathcal{K}|=k, 
    \bm{\pi_K}= \{\pi_k, k \in \mathcal{K}\}} I(X_e;\bm{Y}_\mathcal{K}, M_{\bm{\pi_K}})
\end{align} 
where $X_e = S_{e,0:T}$ is the environment state trajectory.

Since the environment state sequence is uncontrollable,  $H(X_e)$ is a constant. In this case, given   $I(X; Y ) =  H(X) -  H(X|Y)$,  
maximizing the mutual information is equivalent to minimizing the uncertainty in the environment state trajectory.
That is, the optimization problem in \eqref{eq:opt-traj-env} is equivalent to 
 \begin{align}
    \label{eq:opt-traj-env-entropy}
\optmax_{\mathcal{K}\subset \mathcal{N}, |\mathcal{K}|=k, 
\bm{\pi_K}= \{\pi_k, k \in \mathcal{K}\}}  - H(X_e| \bm{Y}_\mathcal{K}, M_{\bm{\pi_K}})
    \end{align} 

In general, the objective in \eqref{eq:opt-traj-env} is not submodular in the set of agents' observations. This is because of the local observations of two agents may no longer be conditionally independent given the environment-state trajectory. However, we show that for a subset of the problems, we can also derive submodularity in \eqref{eq:opt-traj-env}.

Moreover, under the transition- and observation-independence conditions below, the Dec-POMDP induced by selecting a subset of agents can be constructed in a standard way by restricting the dynamics and observation model to the environment state and the selected agents' local states. In particular, because the joint transition and observation models factor across agents, the evolution of the selected agents depends only on their own local states/actions and the environment process, and the non-selected agents can be marginalized without affecting the resulting reduced model.

\begin{assumption}
\label{assumption:independent}
    The agents and their environment have independent dynamics. That is, there exists transition functions $P_e, \{P_i, i\in \mathcal{N}\}$ such that
    \begin{multline}
        \label{eq:indepent-dyn}    P( (s_e',s'_1,\ldots, s'_N)| (s_e,s_1,\ldots, s_N), (a_1,\ldots, a_N)) \\
    = P_e(s_e' |s_e  ) \prod_{i=1}^N  P_i(s_i'| s_i, a_i).
    \end{multline}
The initial state distribution can also be decomposed as 
\[
\mu_0((s_e,s_1,\ldots, s_N))= \mu _{0,e}(s_e) \cdot \prod_{i=1}^N \mu_{0,i} (s_i) .
\]
The observation of agent $i$ is independent from the other agents' states: That is, if $s_e=s_e'$ and $s_i=s_i'$, then for any $s_j, s_j'$ when $j \ne i$, 
\[
E_i(s_e,s_1,\ldots, s_N )= E_i(s'_e,s'_1,\ldots, s'_N ).
\]
\end{assumption}
In this case, we define a local observation function $\hat{E}_i: S_e\times S_i \rightarrow \dist(\mathcal{O}_i)$ such that $ \hat{E}_i(s_e,s_i)  \coloneqq E_i(s_e,s_1,\ldots, s_N )$.

Due to the decoupled dynamics and observation, we can define individual agent's observation sequence based on its own policy. Let $\pi_i$ be an observation-dependent policy for agent $i$.  The $\pi_i$-induced local observation $Y_i=  O_{i,0:T}$ is defined by
\[
\Pr(O_{i,t}=o) =  \hat{E}_i(o| S_{e,t}, S_{i,t}),
\]
where $S_{e,t}\sim P_e(\cdot | S_{e,t-1})$, $
S_{i,t} \sim P_i( \cdot| S_{i,t-1}, A_{i,t})$, $S_{i,0}\sim \mu_{0,i}(\cdot)$, and \\
$A_{i,t}\sim \pi_i(\cdot | O_{i,0:t}, A_{i, 0:t-1})$, for any $t \in \{1,\ldots, T\}$.

The following result again assumes a fixed joint policy $\bm{\pi}=\langle\pi_1,\ldots, \pi_N \rangle$ and the induced stochastic process of states, actions, and observations.
\begin{lemma}
\label{lma:cond-independent-special}
For any $i,j\in \mathcal{N}$, $i \ne j$. Let   $Y_i, Y_j$ be the local observation sequences for agents $i, j$.
Under Assumption~\ref{assumption:independent}, let $X_e \coloneqq S_{e, {0:T}}$ be the latent state sequence of the environment agent, then  $Y_i$ and $Y_j$ are conditionally independent given $X_e$.
\end{lemma}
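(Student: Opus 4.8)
The plan is to establish the stronger claim that, conditioned on the environment trajectory $X_e$, the entire internal histories of the agents are \emph{mutually} independent, and then descend to the stated conditional independence by a projection argument. For each agent $i$ let $\xi_i \coloneqq (S_{i,0:T}, A_{i,0:T-1}, O_{i,0:T})$ denote its complete state--action--observation history. I would prove that $\xi_1,\ldots,\xi_N$ are conditionally independent given $X_e$; since $Y_i = O_{i,0:T}$ is a deterministic component of $\xi_i$, marginalizing out the state and action components of each $\xi_i$ preserves the product structure and yields $Y_i \perp\!\!\!\perp Y_j \mid X_e$.

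First I would unroll the induced process $M_{\bm{\pi}}$ via the chain rule to write the joint law $\Pr(X_e, \xi_1, \ldots, \xi_N)$ as a product of initial distributions, one-step transitions, emissions, and policy factors. The environment contributes $\mu_{0,e}(s_{e,0}) \prod_{t=1}^T P_e(s_{e,t}\mid s_{e,t-1})$, while each agent $i$ contributes its initial state $\mu_{0,i}(s_{i,0})$, its emissions $\prod_{t=0}^T \hat{E}_i(o_{i,t}\mid s_{e,t}, s_{i,t})$, its transitions $\prod_{t=1}^T P_i(s_{i,t}\mid s_{i,t-1}, a_{i,t-1})$, and its policy factors $\prod_{t} \pi_i(a_{i,t}\mid o_{i,0:t}, a_{i,0:t-1})$. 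I would then invoke Assumption~\ref{assumption:independent}: because the transition kernel and the initial distribution decompose across agents and the environment, and because the local emission $\hat{E}_i$ depends only on $(s_e, s_i)$, every agent-$i$ factor depends solely on $\xi_i$ together with the (conditioned-on) trajectory $s_{e,0:T}$. The decisive point is that the policy is \emph{decentralized}, so $\pi_i$ conditions only on agent $i$'s own observation--action history and each policy factor is absorbed entirely into the agent-$i$ group. Hence the joint law factorizes as $\Pr(X_e, \xi_1, \ldots, \xi_N) = h(X_e)\prod_{i\in\mathcal{N}} f_i(\xi_i; X_e)$, where $h$ is the environment term and $f_i$ depends only on $\xi_i$ and $X_e$. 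Conditioning on $X_e = s_{e,0:T}$, the normalizing constant $\sum_{\xi_1,\ldots,\xi_N}\prod_i f_i(\xi_i; s_{e,0:T})$ factorizes across agents, which immediately gives $\Pr(\xi_1,\ldots,\xi_N\mid X_e) = \prod_{i} \Pr(\xi_i\mid X_e)$, and therefore $\xi_i \perp\!\!\!\perp \xi_j \mid X_e$ for $i\ne j$.

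I expect the main obstacle to lie in the careful treatment of the policy factors rather than the transition or emission factors. If policies were permitted to depend on the joint observation--action history, then the agent-$i$ group would couple $\xi_i$ with $\xi_j$ and the factorization would collapse; the argument thus hinges critically on the decentralized structure of $\bm{\pi}$ and on the fact that, under Assumption~\ref{assumption:independent}, agent $i$'s observations are generated solely from $(S_{e,t}, S_{i,t})$. The remainder is routine bookkeeping: verifying that summing the state and action components of each $\xi_i$ out of a product distribution again leaves a product distribution over the observation sequences, so that conditional independence of the full histories descends to conditional independence of $Y_i$ and $Y_j$ given $X_e$.
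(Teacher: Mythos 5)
Your proposal is correct, but it takes a genuinely different route from the paper. The paper proves the lemma by induction on the time index: it establishes that the observation--action prefixes $Y_{i,0:t}$ and $Y_{j,0:t}$ are conditionally independent given $S_{e,0:t}$, starting from the independence of initial local states and propagating through one step of policy, transition, and emission at a time. You instead unroll the entire induced process once, use the chain rule together with Assumption~\ref{assumption:independent} and the decentralized form of $\bm{\pi}$ to exhibit the joint law as $h(X_e)\prod_{i\in\mathcal{N}} f_i(\xi_i; X_e)$, and then invoke the factorization criterion for conditional independence, finishing with a marginalization step. Your argument is sound: every factor (initial distribution, transition, emission, policy) indeed attaches to exactly one agent group or to the environment, the normalizing constant splits as a product of per-agent sums, and conditional independence of the full histories $\xi_i$ descends to their measurable functions $Y_i = O_{i,0:T}$. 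What your approach buys is a stronger conclusion obtained in one shot --- \emph{mutual} (not merely pairwise) conditional independence of all agents' complete state--action--observation histories --- and it makes maximally explicit the two hypotheses doing the work: the decoupled dynamics/emissions of Assumption~\ref{assumption:independent} and the locality of the policies, exactly as you flag in your discussion of where the factorization would collapse. What the paper's inductive proof buys is that it never needs to write down the global likelihood; it tracks only the conditional-independence relations needed at each step, which keeps the bookkeeping light, at the cost of a somewhat more delicate induction step (actions, then next local states, then next observations). Either proof is acceptable; yours is, if anything, the more transparent of the two.
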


\begin{proof}
See appendix.
\end{proof}

\begin{corollary}
   Let $A\subset \mathcal{N}$ and  $Y_A = \{Y_i : i \in A\}$. 
  For any $Y_j $ where $j \notin A$, 
    $H(Y_j|Y_A, X_e) = H(Y_j|X_e).$
    \end{corollary}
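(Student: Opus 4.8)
The plan is to reduce this corollary to the statement that $Y_j$ and the \emph{entire} collection $Y_A$ are conditionally independent given $X_e$, and then invoke the elementary identity between conditional entropy and conditional independence. Concretely, the claimed equality $H(Y_j \mid Y_A, X_e) = H(Y_j \mid X_e)$ is equivalent to $I(Y_j; Y_A \mid X_e) = 0$, which holds precisely when $Y_j \perp\!\!\!\perp Y_A \mid X_e$. So the whole task is to establish this set-level conditional independence, after which the conclusion is immediate from $I(Y_j; Y_A \mid X_e) = H(Y_j \mid X_e) - H(Y_j \mid Y_A, X_e)$.

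This is exactly where the main obstacle lies. Lemma~\ref{lma:cond-independent-special} only asserts \emph{pairwise} conditional independence, $Y_i \perp\!\!\!\perp Y_j \mid X_e$, and pairwise conditional independence does not in general imply that $Y_j$ is independent of the \emph{joint} variable $Y_A$. I therefore cannot simply chain the pairwise statements the way the one-line proof of Lemma~\ref{lma:independent} did; I must upgrade the conclusion to full \emph{mutual} conditional independence of $\{Y_i : i \in \mathcal{N}\}$ given $X_e$.

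To obtain the mutual version, I would return to the generative model under Assumption~\ref{assumption:independent} and argue that, once the environment trajectory $X_e = S_{e,0:T}$ is fixed, each observation sequence $Y_i$ is produced by a closed loop involving \emph{only} agent $i$'s own quantities: its state process $S_{i,0:T}$ driven by the decoupled dynamics $P_i$ and initial law $\mu_{0,i}$, its observation-based policy $\pi_i$, and the emission noise of $\hat{E}_i$. These randomness sources are disjoint across agents and interact solely through $X_e$, so the conditional joint law factorizes as $\Pr(Y_1, \ldots, Y_N \mid X_e) = \prod_{i \in \mathcal{N}} \Pr(Y_i \mid X_e)$. Marginalizing out the agents not in $A \cup \{j\}$ preserves this product form, which yields $Y_j \perp\!\!\!\perp Y_A \mid X_e$ for every $A \subset \mathcal{N}$ with $j \notin A$.

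The one step that deserves genuine care, rather than routine calculation, is justifying this factorization of $\Pr(Y_1, \ldots, Y_N \mid X_e)$ --- in particular, checking that an agent's observation-dependent policy $\pi_i$ does not reintroduce cross-agent coupling once $X_e$ is conditioned upon. This is precisely what the decoupled transition and observation structure of Assumption~\ref{assumption:independent} guarantees: agent $i$'s action at each step is measurable with respect to its own observation history, which in turn depends only on $S_{i,0:T}$ and $X_e$, so no information flows between agents except through the common conditioned-on trajectory $X_e$. With the factorization in hand, the entropy identity closes the proof.
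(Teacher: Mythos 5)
Your proof is correct, and it takes a more careful route than the paper does. The paper's own proof of this corollary is a one-liner: ``similar to the proof of Lemma~\ref{lma:independent} given Lemma~\ref{lma:cond-independent-special}'' --- that is, it cites the \emph{pairwise} conditional independence $Y_i \perp\!\!\!\perp Y_j \mid X_e$ and treats the entropy identity as immediate, exactly mirroring the one-line proof of Lemma~\ref{lma:independent}. You correctly identify that, read literally, this is a gap: pairwise conditional independence of the $Y_i$'s given $X_e$ does not imply $I(Y_j; Y_A \mid X_e) = 0$ once $|A| \ge 2$; one needs $Y_j$ to be conditionally independent of the \emph{joint} variable $Y_A$. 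Your fix --- returning to the generative model under Assumption~\ref{assumption:independent}, writing each $Y_i$ as a function of $X_e$ and agent $i$'s private randomness (initial state, transition noise, emission noise, policy randomization), noting that these sources are mutually independent across agents and independent of $X_e$ because the environment dynamics $P_e$ are autonomous, and concluding the full factorization $\Pr(Y_1,\ldots,Y_N \mid X_e) = \prod_i \Pr(Y_i \mid X_e)$ before marginalizing --- is exactly the argument the model structure supports, and it is essentially a set-level strengthening of the induction the paper carries out in the appendix for the pairwise case. What your approach buys is rigor where the paper is terse (arguably incomplete); what the paper's approach buys is brevity, since it implicitly assumes the reader will upgrade the pairwise lemma to mutual independence. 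If you wanted to stay closer to the paper, an equivalent repair would be to rerun the appendix induction jointly over all $N$ agents rather than two at a time; your factorization argument accomplishes the same thing in one step.
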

\begin{proof}
The proof is similar to that of Lemma~\ref{lma:independent} given Lemma~\ref{lma:cond-independent-special}.
\end{proof}
\begin{lemma}
\label{lma:submodular-special}
Let $A\subset \mathcal{N}$ and  $Y_A = \{Y_i : i \in A\}$. 
Let
$
g_e(Y_A) \coloneqq I(X_e; Y_A),
$  where  $I(X_e; Y_A)$ is the mutual information between the environmental state sequence $X_e$ and the collective observations $Y_A$. The function $g_e(\cdot)$ is monotone and submodular.   
\end{lemma}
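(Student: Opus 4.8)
The plan is to observe that this statement is the exact structural analogue of Lemma~\ref{lma:submodular}, with the environment state sequence $X_e$ now playing the role that the joint latent sequence $X$ played there. The entire argument of Lemma~\ref{lma:submodular} rested on only two ingredients: generic information-theoretic identities (chain rule, nonnegativity of conditional mutual information, monotonicity of conditional entropy), and the conditional-independence fact $H(Y_j \mid Y_A, X) = H(Y_j \mid X)$ supplied by Lemma~\ref{lma:independent}. For the present lemma the same identities apply unchanged, and the conditional-independence ingredient is now furnished by Lemma~\ref{lma:cond-independent-special} together with the Corollary immediately preceding, which gives $H(Y_j \mid Y_A, X_e) = H(Y_j \mid X_e)$. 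So the proof transfers essentially verbatim once these substitutions are made. I would set $\Omega := \{Y_i : i \in \mathcal{N}\}$ as the ground set and fix $Y_A \subset Y_B \subset \Omega$ and $Y_j \in \Omega \setminus Y_B$.

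For \textbf{monotonicity}, I would write $I(X_e; Y_B) - I(X_e; Y_A) = I(X_e; Y_B \setminus Y_A \mid Y_A)$, which is nonnegative because conditional mutual information is always nonnegative, establishing $g_e(Y_A) \le g_e(Y_B)$.

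For \textbf{submodularity}, the chain rule reduces the diminishing-returns inequality to showing $I(X_e; Y_j \mid Y_A) \ge I(X_e; Y_j \mid Y_B)$. I would start from $H(Y_j \mid Y_A) \ge H(Y_j \mid Y_B)$ (conditioning on the larger set $Y_B$ cannot increase entropy), subtract the common quantity $H(Y_j \mid X_e)$ from both sides, and then apply the Corollary to rewrite that quantity as $H(Y_j \mid Y_A, X_e)$ on the left and as $H(Y_j \mid Y_B, X_e)$ on the right. This yields $I(Y_j; X_e \mid Y_A) \ge I(Y_j; X_e \mid Y_B)$, and the symmetry of conditional mutual information gives the desired $I(X_e; Y_j \mid Y_A) \ge I(X_e; Y_j \mid Y_B)$.

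There is no genuine obstacle here: the only place where the problem-specific structure enters is the single substitution step using the Corollary, and all the modeling work—namely the decoupled dynamics and observations of Assumption~\ref{assumption:independent}—has already been absorbed into the conditional independence of $Y_i$ and $Y_j$ given $X_e$ proved in Lemma~\ref{lma:cond-independent-special}. The remaining steps are standard manipulations of entropy and mutual information. Thus the main thing to get right is simply to invoke the correct conditional-independence result ($X_e$ in place of $X$) at the substitution step; everything else parallels the earlier proof line for line.
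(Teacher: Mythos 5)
Your proposal is correct and takes exactly the approach the paper intends: the paper's own proof of Lemma~\ref{lma:submodular-special} is a one-line reference stating it is ``similar to the proof of Lemma~\ref{lma:submodular} where we use Lemma~\ref{lma:cond-independent-special},'' and you have simply carried out that substitution in full detail, correctly routing the conditional-independence step through the Corollary $H(Y_j \mid Y_A, X_e) = H(Y_j \mid X_e)$. No gaps; your write-up is if anything more explicit than the paper's.
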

\begin{proof}
Similar to the proof of Lemma~\ref{lma:submodular} where we use Lemma~\ref{lma:cond-independent-special}.
\end{proof}

\subsection{Environment Secret Estimation}
\label{sec:env-secret}
In this section, we consider a perception objective besides the aforementioned environment trajectory estimation. As in Subsection~\ref{subsec:env-state-seq}, our analysis for this setting assumes the transition- and observation-independence conditions in Assumption~\ref{assumption:independent}, so that the Dec-POMDP induced by a reduced agent set can be constructed by restricting to the environment state and the selected agents' local states/observations.
Consider a random variable $Z$ defined as a surjective function of $X_e$   \ie, $Z = f(X_e)$. It is noted that $H(X_e)$ is a constant and hence $H(Z)$ is a constant.

The agent selection and optimal perception planning problem for inferring the value of $Z$ given the collective observations is formulated as the following optimization problem:

\begin{align}
\label{eq:opt-secret}
\optmax_{\mathcal{K}\subset \mathcal{N}, |\mathcal{K}|=k, 
\bm{\pi_K}= \{\pi_k, k \in \mathcal{K}\}} - H(Z| \bm{Y}_\mathcal{K}, M_{\bm{\pi_K}})
\end{align}  

First, based on the relation between mutual information and entropy, we have
$
I(Z; Y_A)   =H(Z) - H(Z|Y_A),
$
and thus
\[
I(X_e; Y_A )-  I(Z; Y_A )=  H(X_e)- H(X_e|Y_A)  - H(Z) +  H(Z|Y_A).
\]
Given the fact that $H(X_e|Y_A)  -    H(Z|  Y_A)  = H(X_e|Z, Y_A)$, we have
\begin{equation}
\label{eq:secret_MI}
I(Z;Y_A) = I(X_e; Y_A )-  H(X_e)+ H(Z) + H(X_e|Z, Y_A).
\end{equation}
We will show that $I(Z; Y_A)$ is monotone and submodular in the following lemma. 
 \begin{lemma}[Approximate Submodularity]
\label{lma:approx_submodular_secret}
Let
\[
g(A) \coloneqq I(Z; Y_A),
\]
where $Z$ is a secret variable and $Y_A$ denotes the joint observations collected by agent set $A$.
Then $g(\cdot)$ is monotone and $\epsilon$-approximately submodular, i.e., there exists a submodular function
\[
h(A) \coloneqq I(X_e; Y_A)
\]
such that for all $A \subseteq \mathcal{N}$,
\[
(1-\epsilon) h(A) \le g(A) \le (1+\epsilon) h(A),
\]
where
\[
\epsilon \coloneqq \max_{A} \frac{I(X_e; Y_A \mid Z)}{I(X_e; Y_A)} \in (0,1].
\]
\end{lemma}
 
 \begin{proof}
See appendix.
\end{proof}

\section{Approximation Algorithms for Joint Agent Selection and Policy Synthesis}

The previous section shows the submodularity for several inference objectives under various assumptions for \ac{dec_pomdp}, while the joint policies are assumed to be fixed. Thus, these results do not solve the problems in \eqref{eq:opt-traj}, \eqref{eq:opt-traj-env}, or \eqref{eq:opt-secret}, since they all require the joint agent selection and policy synthesis.

\subsection{Approximation Algorithm  and Performance Guarantee}
In this section, we devise an efficient method called IMAS$^2$ (Algorithm~\ref{alg:greedy_MI}) to solve these problems and analyze the performance guarantee. The variable $X$ to be inferred  can be changed to $X_e$ or $Z$, depending on  \eqref{eq:opt-traj}, \eqref{eq:opt-traj-env}, or \eqref{eq:opt-secret}.

\begin{algorithm}[t]
\caption{IMAS$^2$ Algorithm}
\label{alg:greedy_MI}
\begin{algorithmic}[1]
\REQUIRE Ground set of agents $\mathcal{N}$, budget $k$
\ENSURE Selected agent set $\mathcal{K}$
\STATE Initialize $\mathcal{K}^{(0)} \gets \varnothing$, $\bm{\pi}^{(0)}\gets \varnothing$.
\FOR{$i = 1$ to $k$}
    \STATE Select
    $   ( j^\ast , \pi_{j^\ast} ) = $ \\$ 
   \argmax_{j \in \mathcal{N} \setminus \mathcal{K}^{(i)}}\max_{ \pi_j\in \Pi_j}
        \Big( I(X; \bm{Y}_{\mathcal{K}^{(i)} \cup \{j\}}, M_{\bm{\pi}_{\mathcal{K}^{(i)}} \cup \{\pi_j\}})
        - I(X; \bm{Y}_{\mathcal{K}^{(i)}}, M_{\bm{\pi}_{\mathcal{K}^{(i)}}}) \Big)
  $
    \STATE Update $\mathcal{K}^{(i)}  \gets \mathcal{K}^{(i-1)} \cup \{j^\ast\}$, $\bm{\pi}^{(i)} \gets\bm{\pi}^{(i-1)} \cup \{\pi_{j^\ast}\}  $
\ENDFOR
\STATE \textbf{return} $\mathcal{K} \coloneqq \mathcal{K}^{(k)}, \bm{\pi}^\star \coloneqq \bm{\pi}^{(k)}.$
\end{algorithmic}
\end{algorithm}


 In each iteration of Algorithm~\ref{alg:greedy_MI}, for each candidate agent,  its optimal local policy is computed to maximize the mutual information between the agent’s observations and the latent variables of interest. Then, with the computed policies, the algorithm evaluates the marginal gain contributed by each agent and greedily selects the next one that yields the largest improvement.


A key property of submodular functions is that they admit strong approximation guarantees when optimized under cardinality constraints. However, because we are simultaneously selecting the agent and determining their policy, the proof of approximate optimality needs to be modified.
Next, we derive the performance guarantee of the IMAS$^2$ algorithm when the objective functions are submodular.

 In \eqref{eq:opt-traj},  let  
  $f(A, \bm{\pi}_A ) =  I(X; Y_{A}, \calM_{ \bm{\pi}_A})$ and in \eqref{eq:opt-traj-env}, let $f(A, \bm{\pi}_A ) =  I(X_e; Y_{A}, \calM_{ \bm{\pi}_A})$, 
  for any subset $A\subset \mathcal{A}$ and $\bm{\pi}_A = \{\pi_i, i\in A\}$.
Let $(\calK^\star, \bm{\pi}^\star)$ be the optimal solution to \eqref{eq:opt-traj} (or \eqref{eq:opt-traj-env}). Let $(\calK^{(i)}, \bm{\pi}^{(i)})$ be the solution returned by Algorithm~\ref{alg:greedy_MI} after the $i$-th selection.
By default, $\mathcal{K}^{(0)}  = \varnothing$ and $\bm{\pi}^{(0)} = \varnothing$.  
  By default, $f(\varnothing,\varnothing)=0$.
\begin{proposition}
\label{prop:delta}
	Let $\Delta_i  \coloneqq f(\calK^{(i )}, \bm{\pi}^{(i )})  - f(\calK^{(i-1)}, \bm{\pi}^{(i-1)})   $ for $1\le i < k$. It holds that 
	$ 
	\Delta_{i+1}\le \Delta_{i }, \forall  0 \le i < k.
		$  
\end{proposition}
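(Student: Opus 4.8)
The plan is to adapt the classical Nemhauser--Wolsey argument \cite{nemhauser1978analysis} showing that the greedy marginal gains of a monotone submodular set function are non-increasing, with two twists: the ground-set element is now an agent--policy pair $(j,\pi_j)$, and the policy of the newly added agent is itself optimized inside the selection step. Write $(j_{i+1},\pi_{i+1})$ for the pair chosen at iteration $i+1$, so that by the chain rule of mutual information (as in the proof of Lemma~\ref{lma:submodular}) $\Delta_{i+1}=f(\calK^{(i)}\cup\{j_{i+1}\},\bm{\pi}^{(i)}\cup\{\pi_{i+1}\})-f(\calK^{(i)},\bm{\pi}^{(i)})=I(X;Y_{j_{i+1}}\mid Y_{\calK^{(i)}})$, and recall $\calK^{(i)}=\calK^{(i-1)}\cup\{j_i\}$, $\bm{\pi}^{(i)}=\bm{\pi}^{(i-1)}\cup\{\pi_i\}$.

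First I would invoke greedy optimality. Since $j_{i+1}\notin\calK^{(i)}\supseteq\calK^{(i-1)}$, the pair $(j_{i+1},\pi_{i+1})$ was already an admissible candidate at iteration $i$; because the algorithm instead selected $(j_i,\pi_i)$ as the maximizer of the marginal gain over the context $(\calK^{(i-1)},\bm{\pi}^{(i-1)})$, we obtain $\Delta_i\ge\delta$, where
\[
\delta \coloneqq f(\calK^{(i-1)}\cup\{j_{i+1}\},\bm{\pi}^{(i-1)}\cup\{\pi_{i+1}\})-f(\calK^{(i-1)},\bm{\pi}^{(i-1)})=I(X;Y_{j_{i+1}}\mid Y_{\calK^{(i-1)}}).
\]
It then remains to establish the diminishing-returns inequality $\Delta_{i+1}\le\delta$, which with the chain-rule identities above reads $I(X;Y_{j_{i+1}}\mid Y_{\calK^{(i)}})\le I(X;Y_{j_{i+1}}\mid Y_{\calK^{(i-1)}})$, i.e. conditioning on the larger observation context cannot increase the residual information. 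Holding the chosen policy $\pi_{i+1}$ fixed, this is exactly the inequality derived inside the proof of Lemma~\ref{lma:submodular} (and Lemma~\ref{lma:submodular-special} together with its secret analogue), which relied only on monotonicity of conditional entropy and on the conditional-independence identity $H(Y_{j_{i+1}}\mid X,Y_A)=H(Y_{j_{i+1}}\mid X)$ for any $A\not\ni j_{i+1}$. Both facts carry over unchanged with $\pi_{i+1}$ fixed, so chaining yields $\Delta_{i+1}\le\delta\le\Delta_i$.

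The main obstacle is that Lemmas~\ref{lma:submodular} and~\ref{lma:submodular-special} were proved for a single fixed joint policy, whereas here the two contexts $(\calK^{(i-1)},\bm{\pi}^{(i-1)})$ and $(\calK^{(i)},\bm{\pi}^{(i)})$ differ not only by the agent $j_i$ but also carry the freshly committed policy $\pi_i$. For the diminishing-returns step to be well posed there must be a single joint law of $(X,Y_{\calK^{(i-1)}},Y_{j_i},Y_{j_{i+1}})$ against which both sides are evaluated, i.e. committing $\pi_i$ between iterations must leave the observation laws of the already-selected agents intact. This is precisely what the decoupling in Assumption~\ref{assumption:independent} delivers: $X_e$ is uncontrollable and each agent's observation sequence depends only on its own policy, so $Y_{\calK^{(i-1)}}$ is the same random object whether or not $(j_i,\pi_i)$ has been appended, and the conditional-independence identity holds for every choice of the remaining agents' policies. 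I would therefore make this policy-insensitivity explicit before applying the submodular chain; under the coupled-dynamics setting of Assumption~\ref{assumption:independent-obs} the same step requires verifying that the emission-level conditional independence given the joint trajectory $X$ is preserved as policies are updated, which is the genuinely delicate point of the argument.
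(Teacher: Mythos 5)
Your proof is correct and follows essentially the same argument as the paper's: fix the agent--policy pair chosen at the later iteration, bound its marginal gain in the smaller context via diminishing returns (the argument of Lemma~\ref{lma:submodular}), and bound that in turn by the greedy maximization at the earlier iteration---you merely chain the two inequalities in the opposite order. Your closing observation, that the submodularity lemmas are proved for a single fixed joint policy and hence one must verify that committing $\pi_i$ between iterations leaves the law of the already-selected agents' observations unchanged (immediate under Assumption~\ref{assumption:independent}, genuinely delicate under the coupled dynamics of Assumption~\ref{assumption:independent-obs}), is a real subtlety that the paper's proof passes over silently.
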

\begin{proof} 
See appendix.
\end{proof}

\begin{theorem} 
\label{thm:approximate_error}

 Assuming for all $0 \le i <k $, $\frac{\Delta_{i}}{\Delta_{i+1}}   \le \frac{k+1}{k}$, then 
\begin{equation}
\label{eq:greedy-bound}
f(\calK^{(k)}, \bm{\pi}^{(k)}) \ge (1-\frac{1}{e}) f(\calK^\star, \bm{\pi}^\star).
\end{equation}\end{theorem}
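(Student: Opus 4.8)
The plan is to run a Nemhauser--Wolsey-style argument, but reorganized so that it never forms the union of the greedy set with the optimal set (an operation that is ill-defined here). Write $\mathrm{OPT}\coloneqq f(\calK^\star,\bm{\pi}^\star)$ and $\mathrm{GRD}\coloneqq f(\calK^{(k)},\bm{\pi}^{(k)})$. Two elementary observations set things up: since $f(\varnothing,\varnothing)=0$, the greedy value telescopes to $\mathrm{GRD}=\sum_{i=1}^{k}\Delta_i$; and by the preceding Proposition the marginal gains are non-increasing, $\Delta_1\ge\Delta_2\ge\cdots\ge\Delta_k\ge 0$.

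First I would upper-bound $\mathrm{OPT}$. Holding the single joint policy $\bm{\pi}^\star$ fixed, the map $A\mapsto f(A,\bm{\pi}^\star_A)$ is normalized and monotone submodular by Lemma~\ref{lma:submodular} (and its analogues for $X_e$ and $Z$), hence subadditive, so
\[
\mathrm{OPT}\le\sum_{j\in\calK^\star}f(\{j\},\{\pi_j^\star\}).
\]
Each summand is at most $\Delta_1$: the first greedy step maximizes $f(\{j\},\{\pi_j\})$ over \emph{all} agent--policy pairs, so in particular $f(\{j\},\{\pi_j^\star\})\le\Delta_1$. Since $|\calK^\star|=k$, this yields the clean bound $\mathrm{OPT}\le k\,\Delta_1$, which replaces the classical decomposition $\mathrm{OPT}\le f(\calK^{(i)},\bm{\pi}^{(i)})+k\Delta_{i+1}$.

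Next I would lower-bound $\mathrm{GRD}$ using the ratio hypothesis. Rewriting $\Delta_i/\Delta_{i+1}\le(k+1)/k$ as $\Delta_{i+1}\ge\tfrac{k}{k+1}\Delta_i$ and iterating gives $\Delta_i\ge\big(\tfrac{k}{k+1}\big)^{i-1}\Delta_1$, so summing the geometric series,
\[
\mathrm{GRD}=\sum_{i=1}^{k}\Delta_i\ge\Delta_1\sum_{i=0}^{k-1}\Big(\tfrac{k}{k+1}\Big)^{i}=\Delta_1\,(k+1)\Big(1-\big(\tfrac{k}{k+1}\big)^{k}\Big).
\]
Dividing by $\mathrm{OPT}\le k\Delta_1$ gives $\mathrm{GRD}/\mathrm{OPT}\ge\tfrac{k+1}{k}\big(1-(\tfrac{k}{k+1})^{k}\big)$, and after simplification the claim reduces to the scalar inequality $\tfrac{1}{k}+\tfrac{1}{e}\ge\big(\tfrac{k}{k+1}\big)^{k-1}$, which holds for every $k\ge 1$ (it follows from $(1-\tfrac{1}{k+1})^{k-1}\le e^{-(k-1)/(k+1)}$ together with a short estimate, the margin being of order $1/k$). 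This closes the proof.

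The main obstacle is not computational but conceptual, and it is concentrated in the $\mathrm{OPT}$ bound. The textbook argument proves $\mathrm{OPT}\le f(\calK^{(i)},\bm{\pi}^{(i)})+k\Delta_{i+1}$ by evaluating $f$ on $\calK^\star\cup\calK^{(i)}$; but submodularity was only established for a single fixed joint policy, and an agent lying in both $\calK^\star$ and $\calK^{(i)}$ with two distinct policies makes this union meaningless. Substituting the global subadditive bound $\mathrm{OPT}\le k\Delta_1$ is precisely what forces the hypothesis on the ratios $\Delta_i/\Delta_{i+1}$: it guarantees the greedy gains decay no faster than the rate $k/(k+1)$, and one must then check that this rate still produces the tight $(1-1/e)$ factor, which is the one point where care is required.
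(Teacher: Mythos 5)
Your proof is correct, but it takes a genuinely different route from the paper's. The paper runs a Nemhauser--Wolsey-style induction on the gap $g_i \coloneqq f(\calK^\star,\bm{\pi}^\star)-f(\calK^{(i)},\bm{\pi}^{(i)})$: it proves $g_i\le k\,\Delta_{i+1}$ by induction, where the base case is exactly your subadditivity bound $\mathrm{OPT}\le k\Delta_1$ and the inductive step is where the ratio hypothesis enters; it then applies the standard contraction $g_{i+1}=g_i-\Delta_{i+1}\le(1-\tfrac1k)\,g_i$ to conclude $g_k\le(1-\tfrac1k)^k g_0\le e^{-1}\,\mathrm{OPT}$. You avoid the gap induction altogether: submodularity is invoked only once (to get $\mathrm{OPT}\le k\Delta_1$), the ratio hypothesis is converted into the geometric decay $\Delta_i\ge(\tfrac{k}{k+1})^{i-1}\Delta_1$, and the telescoped greedy value is bounded by summing the series, giving the factor $\tfrac{k+1}{k}\bigl(1-(\tfrac{k}{k+1})^k\bigr)$, which is numerically a bit sharper than the paper's implicit $1-(1-\tfrac1k)^k$ and tends to $1-1/e$ from above. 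Both you and the paper correctly recognize that the classical step of evaluating $f$ on $\calK^\star\cup\calK^{(i)}$ is unavailable here (an agent lying in both sets may carry two distinct policies) and both substitute the ratio assumption for it; your version makes the role of that assumption more transparent, while the paper's stays closer to the classical template. (Both arguments also implicitly use $\Delta_{i+1}\ge 0$, guaranteed by monotonicity of the information objective, when rearranging the ratio hypothesis into $\Delta_{i+1}\ge\tfrac{k}{k+1}\Delta_i$.)

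The one place where you owe more detail is the closing scalar inequality $\tfrac1k+\tfrac1e\ge\bigl(\tfrac{k}{k+1}\bigr)^{k-1}$. It is true for every $k\ge1$, but the ``short estimate'' is less short than you suggest: after $(1-\tfrac{1}{k+1})^{k-1}\le e^{-(k-1)/(k+1)}=e^{-1}e^{2/(k+1)}$, what remains is $e^{-1}\bigl(e^{2/(k+1)}-1\bigr)\le\tfrac1k$, and the crude bound $e^x-1\le e\,x$ only yields $\tfrac{2}{k+1}\le\tfrac1k$, which is false for $k\ge2$. A bound that does work is $e^x-1\le x e^x$ with $x=\tfrac{2}{k+1}$, which reduces the claim to $e^{2/(k+1)-1}\le\tfrac{k+1}{2k}$; taking logarithms, the function $\phi(k)=\ln\tfrac{k+1}{2k}+1-\tfrac{2}{k+1}$ satisfies $\phi(1)=0$ and $\phi'(k)=\tfrac{k-1}{k(k+1)^2}\ge0$, so $\phi(k)\ge 0$ for all $k\ge1$ and the inequality follows. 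With this patch, your argument is complete and delivers the stated $(1-1/e)$ guarantee.
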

\begin{proof}
See appendix.
\end{proof}

\begin{remark}Since the objective function in \eqref{eq:opt-secret} is only approximately submodular, analyzing the performance of the proposed algorithm requires explicitly accounting for the approximation ratio $\epsilon$. We defer a performance analysis and the establishment of formal guarantees for the secret inference case to future work.
\end{remark}
\subsection{Policy Synthesis for Maximizing Information Gain}

In Algorithm~\ref{alg:greedy_MI}, the outer maximization is straightforward: the optimal agent is selected by comparing all candidate agents. However, we must also address the inner policy optimization problem: for each $j \in \mathcal{N}\setminus \mathcal{K}^{(i)}$,
\begin{equation*}
\begin{aligned}
\bm{\pi}_{ j^\ast} & = \arg\max_{\pi_j} \Big( I(X; \bm{Y}_{\mathcal{K} \cup \{j\}}, M_{\bm{\pi}_{\mathcal{K}} \cup \{\pi_j\}}) 
 - I(X; \bm{Y}_{\mathcal{K}}, M_{\bm{\pi}_{\mathcal{K}}}) \Big).
\end{aligned}
\end{equation*} 
We discuss how to solve the optimization problem using existing results in literature.

\paragraph{Late state sequence estimation: } In \cite{Molloy2023smoother}, the authors develop a method to compute an optimal policy that maximizes $I(X; Y, M_\theta)$ for a single-agent POMDP. This method can be used for solving the inner maximization problem: First, we compute the single-agent POMDP for agent $j$ by fixing the selected agents' policies $\bm{\pi}_{\mathcal{K}^{(i)}}$ computed from the previous iteration; second, the algorithm for trajectory estimation in single-agent POMDP \cite{Molloy2023smoother}
can be applied. Note that the observations received by selected agents are included in the observation of the single-agent POMDP.

\paragraph{Inferring environment state trajectory or secret}The method in \cite{shi2024active} employs a policy gradient approach. 
  For agent $i$, Let $\{\pi_{i, \theta} | \theta_i \in \Theta\}$ be a class of parameterized stochastic policies. We denote the joint policy parameter for all agents as $\bm{\theta} $ and the corresponding joint policy as $\bm{\pi}_{\bm{\theta}}$. 
Let $Y(\theta_j)$ be agent $j$'s observation under agent $j$'s policy parameterized by $\theta_i$ and $\bm{Y}(\bm{\theta}_{\mathcal{K}^{(i)}})$ is the joint observation of agents in $\mathcal{K}^{(i)}$ given the joint policy of selected agents. 

 As argued in the section~\ref{subsec:env-state-seq}, maximizing the mutual information for these two cases is equivalent to minimizing the conditional entropy in the unknown variable. 
To obtain the locally optimal policy parameter $\theta_j$, we initialize a policy parameter $\theta_j$ and carry out the gradient descent. At each iteration $\tau\ge 0$, let $\bm{\theta} \coloneqq \bm{\theta}_{\mathcal{K}^{(i)}}$ represent the joint policy of selected agents for notation simplicity,
\begin{equation}
\label{eq:gradient_decsent_algorithm}
\theta_{j}^{ \tau + 1}  = \theta^{\tau}_j - \eta  [\nabla_{\theta_j} H (X_e |\bm{Y}(\bm{\theta}) \cup Y (\theta_j))| _{\theta_j = \theta_j^\tau}],
\end{equation}
where $\eta$ is the step size (learning rate). 
 The gradient of conditional entropy is given by
 \begin{equation}
\label{eq:entropy-grad-expectation}
 \nabla_{\theta_j} H (X_e | \bm{Y}(\bm{\theta})) =  \expect_{\bm{y} \sim \calM_{\bm{\theta}}} \left[ H(X_e|\bm{Y} = \bm{y}) \nabla_{\theta^j} \log P_\theta(y_j)\right].
 \end{equation}
Notably, when   evaluating the conditional entropy, we use the joint observation $\bm{y}$ including both the observations of selected agents and the agent $j$'s observation under the current policy $\theta_j^\tau$.  When we are evaluating the gradient of the log probability of observations, we only use the observations of agent $j$ because the other selected agents' policies are fixed and independent from $\theta_j$.
 We have the following result from \cite{shi2024active}:
\begin{equation}
\label{eq:two_grads_equals}
\begin{aligned}
\nabla_{\theta_j} \log  P_\theta(y_j)=  \sum_{t = 0}^{T} 
\nabla_{\theta_j} \log  \pi_{\theta_j}(a_{t}| o_{j,0:t-1}),
\end{aligned}
\end{equation}
where $o_{j,0:t-1}$ is the local  observation of agent $j$. The policy gradient from minimizing $H(Z|\bm{Y}(\bm{\theta}))$ is derived analogously (see \cite{shi2024active} for technical details.)


\section{Experiment}

\begin{figure}[t]
    \centering
    \includegraphics[width=0.7\linewidth]{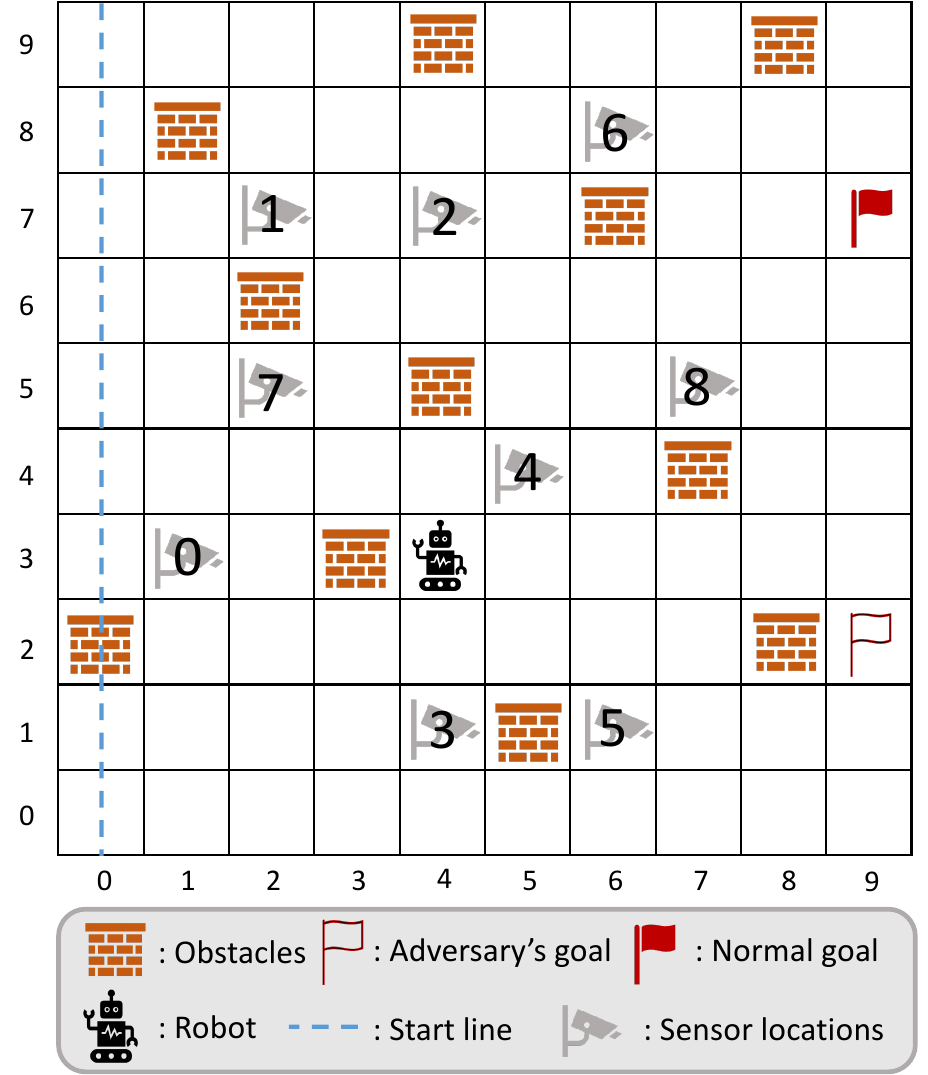}
    \caption{The $10 \times 10$ grid world environment.}
    \label{fig:grid_world}
    \Description{grid world}
\end{figure} 

\begin{figure}[t]
    \centering
\includegraphics[width=0.6\linewidth]{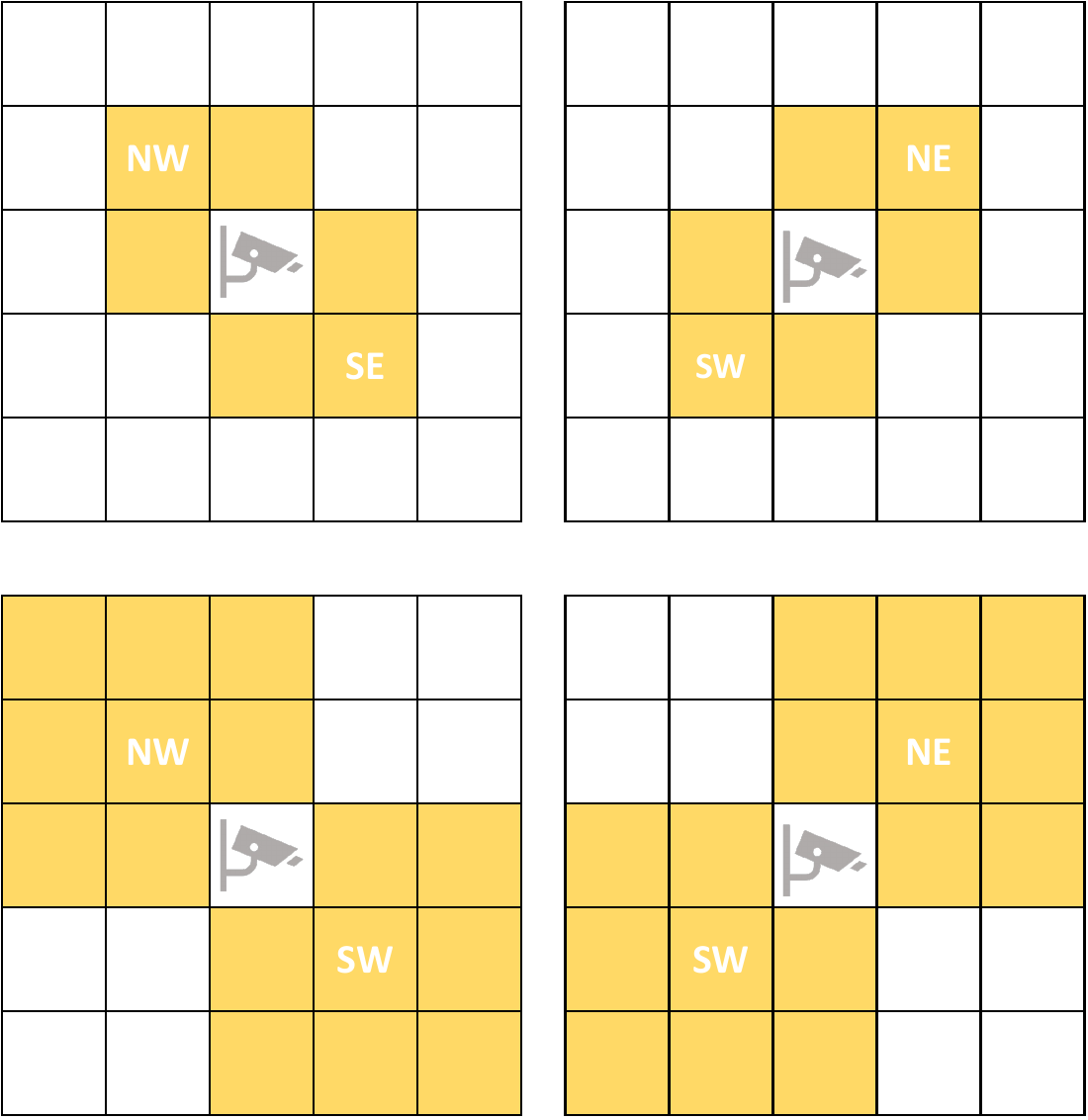}
    \caption{The sensor range of different actions (Top: small range sensors; Bottom: large range sensors).}
    \label{fig:sensor_range}
    \Description{sensor range}
\end{figure}

\begin{figure}[t]
    \centering
    \begin{subfigure}[t]{0.21\textwidth}
        \centering
        \includegraphics[width=\linewidth]{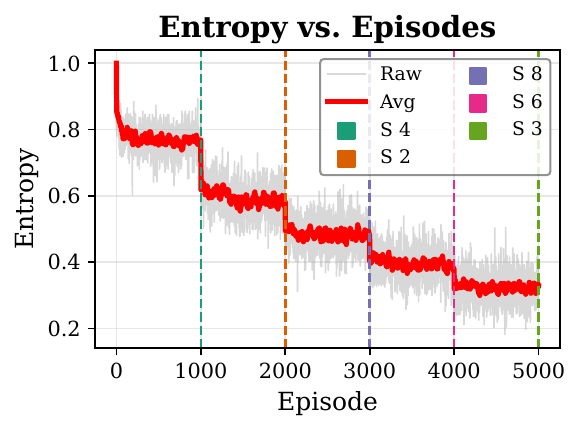}
        \caption{Convergence results for the sensing policies of the selected five sensors.}
        \label{fig:results_9_sensors}
    \end{subfigure}
    \hfill
    \begin{subfigure}[t]{0.21\textwidth}
        \centering
        \includegraphics[width=\linewidth]{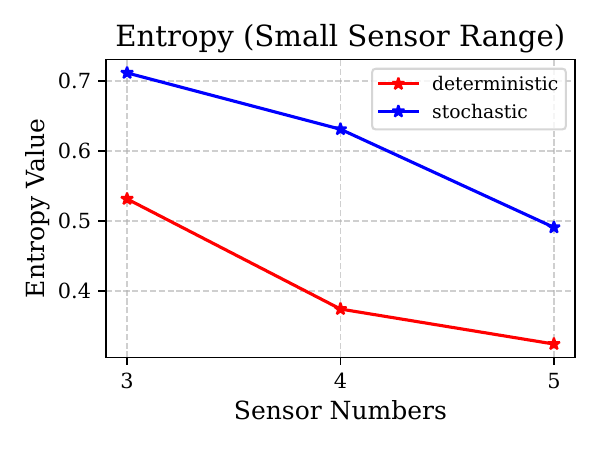}
        \caption{The final entropy under a small sensor range setting.}
        \label{fig:small_sensor}
    \end{subfigure}
    \hfill
    \begin{subfigure}[t]{0.21\textwidth}
        \centering
        \includegraphics[width=\linewidth]{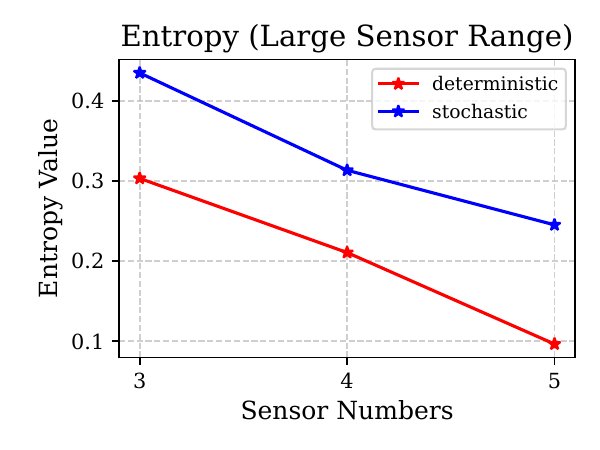}
        \caption{The final entropy under a large sensor range setting.}
        \label{fig:large_sensor}
    \end{subfigure}
    \hfill
    \begin{subfigure}[t]{0.21\textwidth}
        \centering
        \includegraphics[width=\linewidth]{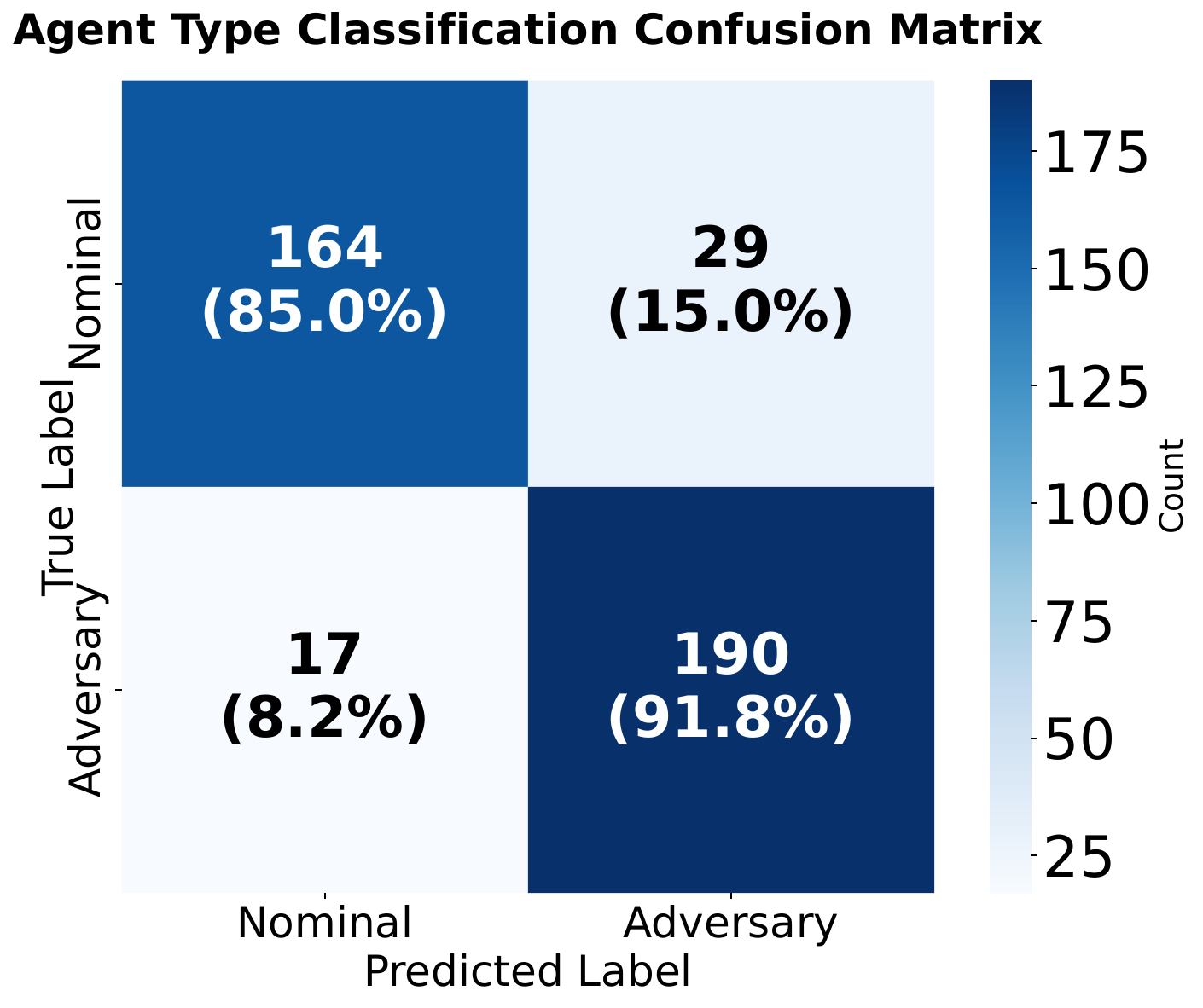}
        \caption{The confusion matrices of the type inference under a deterministic environment.}
        \label{fig:confusion_deterministic}
    \end{subfigure}
    \caption{Comparison of convergence, entropy, and inference performance under different sensing conditions.}
    \label{fig:four_results}
      \Description{Comparison of convergence, entropy, and inference performance under different sensing conditions.}
\end{figure}

We illustrate the effectiveness of our IMAS$^2$ algorithm using a multi-agent selection and sensing strategy design problem in a $10 \times 10$ grid-world environment (Figure~\ref{fig:grid_world}). 
The multi-agent system is a directed sensor network. 
The underlying state of the environment includes a tuple $s_e= (x,y, \mathbf{b})$ where $x, y$ are the position of an environment agent, marked by the robot, which are not controlled. The last variable $\mathbf{b}$ is a Boolean variable. If $\mathbf{b}=0$, then the robot is benign and follows a stochastic shortest path policy to reach the ``Normal goal'' (marked by red, filled flag). If $\mathbf{b}=1$, then the robot is adversarial and follows a stochastic shortest path policy to reach the ``Adversary's goal'' (marked by red, unfilled flag).  Once a robot reaches its goal, it remains there. 
The boundary and obstacles are bouncing obstacles. The robot can move north, south, east, or west, or choose to stay in place. The initial position of the robot is uniformly distributed on the first column of the grid world, and the type of the robot is uniformly sampled from a Bernoulli distribution for $\mathbf{b}$ with $p=0.5$.

In the later experiments, we consider two types of robot dynamics: deterministic and stochastic. Under deterministic dynamics, an action directly results in the intended movement (e.g., moving north always results in a northward transition). Under stochastic dynamics, however, when an agent attempts to move in a given direction, there is a $0.1$ probability of deviating to one of the orthogonal directions. For example, an intended eastward movement may instead result in a northward or southward move, each with probability $0.1$. Collisions with obstacles or boundaries cause the agent to remain in its current cell.

\paragraph{Partial observable multi-agent system} The underlying environment state $s_e$ is partially observed by a multi-agent system, which is a directed sensor network. The sensor can change its limited field of view by choosing different angles of viewing. 
The sensing ranges of these sensors are illustrated in the top two panels of Figure~\ref{fig:sensor_range}. Each sensor has four possible sensing actions: NW, NE, SW, and SE. A sensor may detect the robot if the robot lies within the chosen sensing direction. If the robot is detected, the sensor will receive the exact robot location. We consider a weak sensor scenario. The detection process is stochastic: if the robot is within range, it is detected with probability $0.5$; otherwise, with probability $0.5$, the sensor receives a null observation ($n$). When the robot is outside the sensing range, the sensor also receives a null observation with probability $1$. Note that this multi-agent system, including the uncontrollable environment agent,  can be modeled as a \ac{dec_pomdp} satisfying Assumption~\ref{assumption:independent}.

\paragraph{Inference objective} 
We denote $Z$ be the estimate of the unknown robot's type. This corresponds to the case with inferring the environment secret (section~\ref{sec:env-secret}) $Z = f(S_{e,0:T})$ is defined such that $f(S_{e,0:T}) = \mathbf{b}$ if $S_{e,0} = (x,y, \mathbf{b})$ for any initial state $(x,y)$. The objective is twofold: 1. select $5$ locations for allocating sensors, and 2. compute an optimal sensing policy for each selected sensor such that the collective observations of the sensors for inferring the type of the robot.

\paragraph{Observation-based policy space} For each selected sensor, we employ a long short-term memory (LSTM) neural network-based policy that directly processes observation sequences to action decisions. We implement Algorithm~\ref{alg:greedy_MI} in which the individual sensor's optimal sensing strategy is computed using the policy gradient method in \cite{shi2024active}. 
Figure~\ref{fig:results_9_sensors} illustrates the convergence trend in a setting where the dynamics are deterministic and the sensor range is small
\footnote{We sample $M = 100$ trajectories and set the horizon $T = 15$ for each iteration. The fixed learning rate of the policy gradient algorithm is set to be $0.001$. The hidden dimensions of all layers are set to be $64$. We run $N = 1000$ iterations for each sensor on the 12th Gen Intel(R) Core(TM) i7-12700; the average time consumed for one iteration is $1.3$ seconds. The code is available at \url{https://github.com/AronYoung414/multi-agent-active-perception-grid-world.}}.

Because maximizing mutual information is equivalent to minimizing conditional entropy, which is often more intuitive to interpret, we present the experimental results in terms of conditional entropy. Initially, when no sensors are allocated, the entropy $H(Z)$ equals 1, corresponding to the uncertainty in the initial type distribution. After the first iteration, sensor 4 is selected, and its optimal perception policy reduces the conditional entropy (given its observation) to $0.78$. In the second iteration, sensor 2 is selected, and the joint observations from sensors 4 and 2 further reduce the conditional entropy to $0.6$.
We observe that the policy parameter for each selected sensor converges quickly, within the first 100 iterations. This explains the drops shown in Fig.~\ref{fig:results_9_sensors}.

This iterative process continues until all five sensors are allocated. Upon convergence, the conditional entropy $H(Z | Y; \theta^\star)$, where $\theta^\star$ denotes the parameters of the computed optimal decentralized policies, is approximately $0.367$. This result indicates that the collective observations provide substantial information about the ground robot’s type on average. The selected sensors are $2, 3, 4, 6, 8$—positions that are close to the goals. Notably, sensors located near the robot’s initial position were not selected. We also construct confusion matrices to visualize the classification performance (Figure~\ref{fig:confusion_deterministic}). 
These results show that our method enables the sensors to accurately infer the robot’s type in both nominal and adversarial settings. The inference accuracy is high as $88\%$.

We further evaluate our algorithm under different sensing configurations and environment dynamics. Specifically, we consider two   setups: selecting three sensors from five candidate locations (sensor $0, 1, 2, 3, 4$) and selecting four sensors from seven candidate locations (sensor $0, 1, 2, 3, 4, 5, 6$). Both deterministic and stochastic motion models are tested. In addition, we vary the sensing range—small versus large—as illustrated in the bottom two panels of Figure \ref{fig:sensor_range}.

Figures \ref{fig:small_sensor} and \ref{fig:large_sensor} summarize the resulting conditional entropies under these configurations. For the small sensing range (Figure \ref{fig:small_sensor}), the entropy decreases from about $0.53$ to $0.32$ in the deterministic setting and from $0.70$ to $0.48$ in the stochastic setting as the number of sensors increases from $3$ to $5$. This indicates that adding sensors consistently improves the ability to infer the robot’s type or trajectory. However, the entropy values remain relatively high because of the limited coverage and observation overlap.

When the sensor range is enlarged (Figure \ref{fig:large_sensor}), the performance improves significantly. The entropy drops from roughly $0.32$ to $0.09$ in the deterministic case and from $0.43$ to $0.24$ in the stochastic case as the number of sensors increases. These results highlight two important trends: 1. deterministic environments yield lower residual uncertainty than stochastic ones, since robot behavior is more predictable; and 2. wider sensing coverage substantially enhances information gain, even with the same number of deployed sensors. Overall, the quantitative results demonstrate that the proposed IMAS$^2$ algorithm effectively balances agent selection and policy optimization to reduce uncertainty in cooperative perception tasks.

\begin{table}[t]
\centering
\caption{Baseline comparison between IMAS$^2$ and IPG.}
\label{table:baseline}
\begin{tabular}{@{}cccc@{}} 
\toprule
 & \begin{tabular}[c]{@{}c@{}} Resulting \\ Entropy \end{tabular} & \begin{tabular}[c]{@{}c@{}}Inference \\ accuracy\end{tabular} & \begin{tabular}[c]{@{}c@{}}Time per \\ iteration\end{tabular} \\ \midrule
IMAS$^2$ & 0.493 & 86.0\% & 1.58 s \\
Fixed Selector (IPG) & 0.525 & 75.5\% & 7.62 s \\
Random Selector (IPG) & 0.558 & 70.7\% & 7.63 s \\ 
Visibility-Based Selector (IPG) & 0.502 & 84.1\% & 7.63 s \\ \bottomrule
\end{tabular}
\end{table}

\textbf{Baseline Comparison.}  Existing approaches cannot directly address the joint problem of sensing-agent selection and decentralized active perception. The sensing approaches in~\cite{albrecht2017reasoning, shafipour2022online}
employ Bayesian inference mechanisms similar to those used in IMAS$^2$. However, these works focus on \emph{passive} observation models and do not design or optimize \emph{active sensing policies} that control how information is gathered. And the key challenge is that our objective is \emph{mutual information}, rather than a cumulative reward/value function. Consequently, standard value-function-based MARL algorithms (e.g., MADDPG~\cite{lowe2017multi}, MAPPO~\cite{yu2021mappo}) are not directly applicable, and classical \ac{dec_pomdp} solvers require an explicit reward structure or a belief-based value function.
We therefore compare against a variant of the Independent Policy Gradient (IPG) method~\cite{Daskalakis2020ipg}, which is a decentralized gradient-based method that optimizes each agent's policy directly via policy gradients without relying on a centralized value function.
Since IPG does not include a mechanism for selecting an optimal sensor subset, we provide it with a fixed  chosen group of sensors (\(1, 3, 5, 6, 7\)), randomly chosen group (\(8, 5, 4, 6, 1\)) of sensors, and visibility-based selector (The set (\(0, 2, 4, 5, 8\)) covers the most area) from the nine available candidates. Both methods are evaluated under the stochastic environment with a large sensor range.

 As shown in Table~\ref{table:baseline}, the proposed IMAS$^2$ algorithm achieves a lower conditional entropy (\(0.493\)) compared to IPG (\(0.525\), \(0.558\), \(0.502\)), indicating improved estimation accuracy. Actually, IMAS$^2$ gives a higher inference accuracy ($86.0\%$) in the test environment compared to IPG ($75.5\%$, $70.7\%$, $84.1\%$). Moreover, IMAS$^2$ converges substantially faster, requiring only \(1.5\,\text{s}\) per iteration—approximately \(5.06\) times faster than the IPG baseline (\(7.6\,\text{s}\)). These results demonstrate that IMAS$^2$ effectively balances computational efficiency and information gain in decentralized active perception tasks.

\section{Conclusion}
This paper presented a unified framework for joint agent selection and decentralized policy synthesis in cooperative active perception under the \ac{dec_pomdp} setting. By formulating the perception objective in terms of mutual information and conditional entropy, we established that, under mild independence assumptions, the resulting objective is monotone and submodular with respect to the subset of selected agents' policies. Leveraging this property, we developed the IMAS$^2$ algorithm, which combines submodular optimization with algorithms for active perception planning. Theoretical analysis showed that under a condition for subsequent maximal marginal gains, the proposed algorithm ensures a tight $(1 - 1/e)$ approximation guarantee despite the infinite continuous policy space.
Our experiments in stochastic and deterministic grid-world environments validated the approach, demonstrating that the method effectively solves agent selection and policy optimization to minimize uncertainty in cooperative perception tasks.

Future research could explore the extension to continuous-state and continuous-action  Dec-POMDPs and practical applications such as environment monitoring, intrusion detection, or target tracking. Another direction is to investigate scenarios where the perception agents possess imprecise or uncertain knowledge of the model dynamics, requiring robust or adaptive extensions of the framework. Finally, extending the approach to continuous observation spaces—such as camera images or rich sensory data—would further broaden its applicability to real-world multi-robot and autonomous perception systems.





\begin{acks}
Research was sponsored by the Army Research Laboratory under
Cooperative Agreement Number W911NF-25-2-0045 and by the Army Research Office under Award Number W911NF-22-1-0166.  ARO, as the Federal awarding agency, reserves a royalty-free, nonexclusive
and irrevocable right to reproduce, publish, or otherwise use this software for Federal purposes, and to authorize
others to do so in accordance with 2 CFR 200.315(b). The views and conclusions 
contained in this document are those of the authors and should not be interpreted as representing the official 
policies, either expressed or implied, of the Army Research Laboratory or the U.S. Government. 
\end{acks}

\clearpage

\bibliographystyle{ACM-Reference-Format} 
\bibliography{sample}


\newpage

\appendix

\section{Proofs of Propositions, Lemmas, and Theorems}

\noindent\textbf{Proof of Lemma~\ref{lma:cond-independent}} The sequence $Y_i$ consists of an interleaving sequence of observations $O_{i,0:T}$ and a sequence of actions $A_{i,0:T-1}$. 
 Because $O_{i,t} = E_i(S_t=s)$ is independent from $O_{j,t} = E_j(S_t=s)$ for any $s\in S$  and any $t\ge 0$ by Assumption~\ref{assumption:independent-obs},  the observation $O_{i,0:T}, O_{j, 0:T}$ are conditionally independent given the state  trajectory $X$. Furthermore, since $A_{i,t}$ only depends $O_{i,0:t-1}$ and $A_{i,0:t-1}$, it is independent from $O_{j,0:t-1}$. As a result,  $Y_i, Y_j$ are conditionally independent given $X$.

\noindent\textbf{Proof of Lemma~\ref{lma:submodular}}
Let  $\Omega \coloneqq \{Y_i, i \in \mathcal{N}\}$ be the set of observations for all agents. $Y_A\subset Y_B\subset \Omega$, and $Y_j \in \Omega\setminus Y_B$.

To show $g(\cdot)$ is monotone, we need to show $I(X; Y_A) \le I(X; Y_B)$.
It is noted that 
\begin{align*}
     & I(X; Y_B)- I(X;Y_A) 
    =   I(X; Y_B\setminus Y_A| Y_A)
\end{align*}
where $I(X; Y_B\setminus Y_A| Y_A)$ is the conditional mutual information between $X$ and $Y_B\setminus Y_A$. 

Since conditional mutual information is always non-negative~\cite{Cover2006EIT}, $I(X; Y_B)- I(X;Y_A) \ge 0$ and thus the function $g(\cdot )$ is monotone.
To show $g(\cdot)$ is submodular, we need to show that
\[ I(X; Y_A \cup \{Y_j\}) - I(X; Y_A ) \ge I(X; Y_B \cup \{Y_j\}) - I(X; Y_B).\]
And based on the chain rule of mutual information, 
\[I(X; Y_A \cup \{Y_j\}) - I(X; Y_A )  = I(X; Y_j  |Y_A).\]
Using the property that conditional entropy is monotone, we can derive the following sequence of inequalities:
    \begin{align*}
&    H(Y_j| Y_A) \ge H(Y_j|Y_B)\\
& H(Y_j| Y_A)  - H( Y_j|X) \ge H(Y_j|Y_B) - H( Y_j|X)\\
& H(Y_j| Y_A)  - H( Y_j|X , Y_A) \ge H(Y_j|Y_B) - H( Y_j|X, Y_B)\\
& I(Y_j; X| Y_A) \ge I(Y_j; X|Y_B)\\
& I(X; Y_j|Y_A) \ge I(X; Y_j |Y_B),
    \end{align*}
where the third step is due to Lemma~\ref{lma:independent} and the last step is because conditional mutual information is symmetric.

\noindent\textbf{Proof of Lemma~\ref{lma:cond-independent-special}}
 We proceed by induction on $t$.

\textbf{Base case} $(t=0)$. By Assumption~\ref{assumption:independent} the initial agent states $S_{i,0}$ and $S_{j,0}$ are independent. Since $O_{i,0}$ depends only on $(S_{i,0},S_{e,0})$ and $O_{j,0}$ only on $(S_{j,0},S_{e,0})$, we have
\[
O_{i,0} \perp\!\!\!\perp O_{j,0} \;\big|\; S_{e,0}.
\]
Actions $A_{i,0}$ and $A_{j,0}$ are generated from local policies that depend only on the local observations $O_{i,0}$ and $O_{j,0}$, respectively; therefore
\[
A_{i,0} \perp\!\!\!\perp A_{j,0} \;\big|\; S_{e,0}.
\]
Combining these gives
\[
Y_{i,0}\triangleq (O_{i,0},A_{i,0}) \perp\!\!\!\perp Y_{j,0}\triangleq  (O_{j,0},A_{j,0}) \;\big|\; S_{e,0}.
\]

\textbf{Induction step.} Assume for some $t\ge0$ that
\[
Y_{i,0:t} \perp\!\!\!\perp Y_{j,0:t} \;\big|\; S_{e,0:t}.
\]
We show
\[
Y_{i,0:t+1} \perp\!\!\!\perp Y_{j,0:t+1} \;\big|\; S_{e,0:t+1}.
\]

By the induction hypothesis and the fact that policies are local (i.e., $A_{i,t }$ depends only on $Y_{i,0:t}$), we have
\[
A_{i,t } \perp\!\!\!\perp A_{j,t } \;\big|\; S_{e,0:t}.
\]
Using the factorization in Assumption~\ref{assumption:independent}, conditional on $S_{e,t}$ the next local states are independent across agents:
\[
S_{i,t+1}\ \perp\!\!\!\perp\ S_{j,t+1}\ \big|\ S_{e,t},\;A_{i,t},A_{j,t}.
\] 

Finally, local observation $O_{i,t+1}$ depends only on $(S_{i,t+1},S_{e,t+1})$, and similarly for $O_{j,t+1}$. Thus, conditioned on $S_{e,0:t+1}$, the observations $O_{i,t+1}  $ and $O_{j,t+1} $ are independent given the induction assumptions and the factorization property. Concatenating past and new observations/actions yields
\[
Y_{i,0:t+1} \perp\!\!\!\perp Y_{j,0:t+1} \;\big|\; S_{e,0:t+1},
\]
completing the induction.

\noindent\textbf{Proof of Lemma~\ref{lma:approx_submodular_secret}}
From~\eqref{eq:secret_MI}, the mutual information between the secret $Z$ and the observations $Y_A$
can be written as
\begin{equation}
\label{eq:secret-mi-decomp}
I(Z; Y_A)
= I(X_e; Y_A) - H(X_e) + H(Z) + H(X_e \mid Z, Y_A).
\end{equation}
Rearranging~\eqref{eq:secret-mi-decomp}, we obtain
\[
I(X_e; Y_A) - I(Z; Y_A)
= H(X_e) - H(Z) - H(X_e \mid Z, Y_A).
\]
Since $Z$ is a deterministic function of $X_e$, we have $H(Z \mid X_e) = 0$.
By expanding the conditional mutual information,
\[
I(X_e; Y_A \mid Z)
= H(X_e \mid Z) - H(X_e \mid Z, Y_A)
= H(X_e) - H(Z) - H(X_e \mid Z, Y_A).
\]
Therefore,
\[
I(Z; Y_A) = I(X_e; Y_A) - I(X_e; Y_A \mid Z).
\]
Define
\[
\epsilon \coloneqq \max_{A} \frac{I(X_e; Y_A \mid Z)}{I(X_e; Y_A)}.
\]
Then for all $A$,
\[
I(Z; Y_A) \ge (1-\epsilon) I(X_e; Y_A).
\]
Moreover, by the data processing inequality and the fact that $Z$ is a function of $X_e$,
\[
I(Z; Y_A) \le I(X_e; Y_A).
\]
Combining the above inequalities yields
\[
(1-\epsilon) I(X_e; Y_A) \le I(Z; Y_A) \le I(X_e; Y_A),
\]
which can be loosened to
\[
(1-\epsilon) I(X_e; Y_A) \le I(Z; Y_A) \le (1+\epsilon) I(X_e; Y_A).
\]
Since $I(X_e; Y_A)$ is monotone and submodular in $A$, this shows that $I(Z; Y_A)$ is
$\epsilon$-approximately submodular.

\noindent\textbf{Proof of Proposition~\ref{prop:delta}} 
Recall $(j^\ast, \pi_{j^\ast})$   
  is the choice of agent and its policy selected at the $i$-th iteration of Algorithm~\ref{alg:greedy_MI}. 
According to Algorithm~\ref{alg:greedy_MI}, 
	\begin{align*}
		\Delta_{i+1} & = f(\calK^{(i)}, \bm{\pi}^{(i)})  - f(\calK^{(i-1)}, \bm{\pi}^{(i-1)})  \\
		=  &   f(\calK^{(i-1)}\cup \{j^\ast \}, \bm{\pi}^{(i-1)} \cup \{\pi_{j^\ast} \})  - f(\calK^{(i-1)}, \bm{\pi}^{(i-1)}) \\
		\overset{(i)}{\le} &   f(\calK^{(i-2)}\cup \{j^\ast \}, \bm{\pi}^{(i-2)} \cup \{\pi_{j^\ast } \})  - f(\calK^{(i-2)}, \bm{\pi}^{(i-2)})  \\
	\overset{(ii)}{	\le}  &  \max_ {j \in \mathcal{N}\setminus \calK^{(i-2)}}\max_{ \pi_j\in \Pi_j} \big(f(\calK^{(i-2)}\cup \{j\}, \bm{\pi}^{(i-2)} \cup \{\pi_j\})  \\
		&  - f(\calK^{(i-2)}, \bm{\pi}^{(i-2)}) \big) \\
        = & f(\calK^{(i-1)}, \bm{\pi}^{(i-1)})- f(\calK^{(i-2)}, \bm{\pi}^{(i-2)}) 
		=    
		\Delta_{i}.
	\end{align*}
Here, $(i)$ follows from the \emph{submodularity} of the objective function:  
the marginal contribution of agent $j^\ast$ using policy $\pi_{j^\ast}$ with respect to 
$(\mathcal{K}^{(i-2)}, \bm{\pi}^{(i-2)})$ is greater than or equal to its contribution 
with respect to $(\mathcal{K}^{(i-1)}, \bm{\pi}^{(i-1)})$ under the same policy.  And $(ii)$ is due to the greedy choice made at step $i-1$.

\noindent\textbf{Proof of Theorem~\ref{thm:approximate_error}} 
 For each $i \ge 1$, let the gap between the value under the optimal solution and the value under the solution returned by the $i$-th selection be $g_i \coloneqq f(\calK^\star, \bm{\pi}^\star) -   f(\calK^{(i)}, \bm{\pi}^{(i)})$.

We first show $g_i \le k  \Delta_{i+1}$,  for all $0\le i < k$, using induction: 
At step $0$, 
\begin{align*} 
	g_0 & = f(\calK^\star, \bm{\pi}^\star) -  f(\varnothing, \varnothing)\\
	 \le &\sum_{j \in \calK^\star  } \left(f(j,    \pi_j^\star) -   f(\varnothing, \varnothing) \right) \\
	 \le  & k  \max_{j \in \mathcal{N}} \max_{\pi_j\in \Pi_j}  \left(f(j,    \pi_j ) -   f(\varnothing, \varnothing)\right) = k \Delta_1.
	\end{align*}
where the first inequality is because the contribution of the set of agents and their policies is less than or equal to the sum of individual contributions (due to submodularity).

Assume $g_i \le k \Delta_{i+1}$, that is 
$
f(\calK^\star, \bm{\pi}^\star) -   f(\calK^{(i)}, \bm{\pi}^{(i)})  
\le k \Delta_{i+1}.$
We derive that 
\begin{align*}
	  g_{i+1} \coloneqq & f(\calK^\star, \bm{\pi}^\star) -   f(\calK^{(i+1)}, \bm{\pi}^{(i+1)}) \\
\overset{(i)}{\le} & k  \Delta_{i+1}  + 
f(\calK^{(i)}, \bm{\pi}^{(i)}) - f(\calK^{(i+1)}, \bm{\pi}^{(i+1)}) \\
\overset{(ii)}{=} & k \Delta_{i+1} - \Delta_{i+2}\\
= &   \; k \Delta_{i+2} - (k+1)\Delta_{i+2} + k \Delta_{i+1}\\
\overset{(iii)}{\le}&    k \Delta_{i+2}.
\end{align*}  
where $(i)$ is based on the induction hypothesis, $(ii)$ uses the definition of $\Delta_2$, and $(iii)$
 is because of the assumption that for any $0 \le i <k $, $\frac{\Delta_{i}}{\Delta_{i+1}}   \le \frac{k+1}{k}$. The induction step is proven.

  Using successive gaps, $  \Delta_{i+1} \ge \frac{1}{k} g_i$ and $g_{i+1} =   f(\calK^\star, \bm{\pi}^\star) -   f(\calK^{(i+1)}, \bm{\pi}^{(i+1)})  =f(\calK^\star, \bm{\pi}^\star) -f(\calK^{(i+1)}  , \bm{\pi}^{(i+1)}) +  f(\calK^{(i)}, \bm{\pi}^{(i)}) -   f(\calK^{(i)}, \bm{\pi}^{(i)})   = g_i - \Delta_{i+1} \le g_i - \frac{1}{k} g_i=(1-\frac{1}{k}) g_i$. (This part of the proof is similar to the original proof in \cite{nemhauser1978analysis}).

Consequentially,  $g_k \le  (1-\frac{1}{k})^k g_0 $ where $g_0 = f(\calK^\star, \bm{\pi}^\star)$. Using the exponential bound, $(1-\frac{1}{k})^k\le e^{-1}$ for any $k\ge 1$. Equation~\eqref{eq:greedy-bound} is then derived.

\end{document}